\documentclass{article}

\usepackage{graphicx}
\usepackage{xcolor}
\usepackage{multirow}

\usepackage{amsmath,amssymb,amsthm}
\usepackage{thmtools}

\theoremstyle{plain}
\newtheorem{theorem}{Theorem}
\newtheorem{lemma}{Lemma}
\newtheorem{proposition}{Proposition}

\theoremstyle{definition}

\newtheorem{definition}{Definition}


\providecommand{\detail}[1]{#1}
\renewcommand{\detail}[1]{}

\begin{document}

\title{A no-go theorem for sequential and retro-causal hidden-variable
  theories based on computational complexity.}

\author{Doriano Brogioli}

\maketitle

\begin{abstract}
  The celebrated Bell's no-go theorem rules out the hidden-variable
  theories falling in the hypothesis of locality and causality, by
  requiring the theory to model the quantum correlation-at-a-distance
  phenomena. Here I develop an independent no-go theorem, by
  inspecting the ability of a theory to model quantum
  \emph{circuits}. If a theory is compatible with quantum mechanics,
  then the problems of solving its mathematical models must be as hard
  as calculating the output of quantum circuits, i.e., as hard as
  quantum computing. Rigorously, I provide complexity classes
  capturing the idea of sampling from sequential (causal) theories and
  from post-selection-based (retro-causal) theories; I show that these
  classes fail to cover the computational complexity of sampling from
  quantum circuits. The result is based on widely accepted conjectures
  on the superiority of quantum computers over classical ones.  The
  result represents a no-go theorem that rules out a large family of
  sequential and post-selection-based theories. I discuss the
  hypothesis of the no-go theorem and the possible ways to circumvent
  them. In particular, I discuss the Schulman model and its
  extensions, which is retro-causal and is able to model quantum
  correlation-at-a-distance phenomena: I provides clues suggesting
  that it escapes the hypothesis of the no-go theorem.
\end{abstract}

\tableofcontents


\section{Introduction}

At its beginning, quantum mechanics was met with hostility, partially
because of its probabilistic nature: everybody remembers the famous
Einstein's refusal to ``believe that God plays dice''. Actually, the
dice appear random only if we look at the final outcome, but, looking
at their detailed motion, they are perfectly deterministic. Scientists
discussed if something similar could explain the randomness of quantum
mechanics: some variables, that are hidden or hardly visible, evolving
deterministically but showing a random behaviour at a blurred view.

Years later, Bell's theorem proved that hidden-variable models are not
viable, under the hypothesis of locality and causality. The research
then focused on hidden-variable models that violate either of the two
hypothesis. For a modern discussion of these topics I refer the reader
to Ref.~\cite{wharton2020}. It is curious to notice that the
above-mentioned statement of Einstein, in its complete form, did not
only refer to dice, symbolizing the randomness, but also to telepathy,
symbolizing the violation of locality.

Nowadays, the probabilistic aspect of quantum mechanics is no more
raising so much criticism. Maybe, because we gradually got used to
it. Or, more likely, because more pragmatic, still unanswered
questions arose and more serious problem were found in quantum
mechanics. One of the open questions refers to the huge Hilber space
needed to describe quantum mechanical systems, e.g. 30 qubits require
around one billion complex numbers to be represented. Although a
quantum computer with 30 qubits has a huge computational power, it is
still not clear if it really \emph{contains} (in any sense) this huge
amount of information~\cite{aaronson_ten_challenges}. Hence the
question: do we really need this huge space, or there is an
alternative mathematical theory that operates on a smaller space (on a
smaller number of variables), giving the same results? Such a theory
would fall inside the old definition of hidden-variable theoris, hence
the interest in these theories in this paper. I remark that such
alternative theories would be, likely, still hard to calculate,
notwithstanding the smaller number of variables; this point will play
a key role in the following.

It must be noticed that literature also reports a different, somehow
complementary, use of the name ``hidden-variable theory'': a theory
that deterministically gives the instantaneous values of physical
quantities, once the quantum state (wave) function is known.  An
example is the Bohm's pilot wave model.  This meaning of
``hidden-variable theory'' is outside the scope of this paper. An
interesting analysis of such theories, based on computational
complexity, is reported in Ref.~\cite{aaronson2005bis}.

In this paper, I use the term ``theory'' in the sense of a set of
mathematical models, each of them tentatively modelling a physical
system (this meaning is mutuated from the jargon of field theory and
deviates from the use in philosophy of science). Since we dropped the
concerns about the probabilistic nature of a theory, we can accept
that the variables of the models (more or less hidden) are
probabilistically determined; this represents a deviation from the
original meaning and aim of hidden-variable theories.

Rather than formally defining what a hidden-variable theory is, I will
focus on the discrimination between valid theories, i.e. theories that
give results in agreement with quantum mechanics, from invalid
theories, which instead fail to do so.  Ideally, the analysis applies
to any kind of theories, including the traditional formulations of
quantum mechanics based on the quantum state (wave) function and its
reformulations, e.g. based on the path integrals.

As already mentioned, the Bell's no-go theorem was a fundamental
milestone in science. It is based on the observation of a puzzling
phenomena of quantum mechanics, the correlation at a distance, arising
e.g. in the Einstein-Podolsky-Rosen experiment.  The theorem states
that a hidden-variable theory is not able to model such phenomena,
under the hypothesis that the theory is local and causal. In this
paper, I derive an independent no-go theorem, by discussing the
ability of a hidden-variable theory to model quantum \emph{circuits}.
This analysis allows us to exploit the results from computational
complexity theory. Computational complexity theory has already been
used to explore the properties of variants of quantum
mechanics~\cite{aaronson2005}. In general, it is a valuable
mathematical method for investigating physical theories.

Intuitively, the idea behind the no-go theorem is that the
computational complexity of solving the mathematical models of the
theory must be at least hard as simulating quantum
computation. Rigorously, to each theory, I associate the computational
problem of sampling from its mathematical models; then I define some
computational complexity classes, namely {\bf SampP}, {\bf PostSampP},
and {\bf PostSampP*}, such that the theory is not valid if its
associated computational problem lies in one of those classes.

An insight into the meaning of these classes can be grasped by
inspecting the Bell's theorem hypothesis.  An implicit hypothesis is
that the hidden variables evolve with time such that the state at a
time is calculated from the states at previous times. This condition
is associated to causality, i.e. the time order of cause and
effect. In circuits, there are finite steps, corresponding to the
gates, thus such models will be called ``sequential'': each step can
be calculated from the previous one, starting from an initial
state. In the context of circuits, the locality appears as the
condition that each step, or gate, only works on a fixed number of
bits. Such a theory would be local and causal, and thus it would be
ruled out by Bell's theorem. The computational problem associated to
such theories lies in {\bf SampP}, thus confirming that they are
not viable, through a totally different approach.

In order to escape causality, as an alternative to sequential models,
``all-at-once'' models have been proposed. In such models, there is no
initial state and no way to sequentially solve the model. A particular
kind of such models are the retro-causal models, in which conditions
are imposed both at the beginning and at the end of the process. If
the evolution from starting to ending time is probabilistic, it is
possible to describe retro-causality in the form of post-selection:
starting from an initial state, the hidden variables evolve
probabilistically; at the end, the measurement setting imposes its
constraints, deciding to accept or reject the final state. With
additional hypothesis, the computational problems associated to
post-selection theories belong to {\bf PostSampP} or {\bf PostSampP}
computational complexity classes, thus these theories are not valid.

An interesting example of retro-causal model is the Schulman model,
which has been extended to reproduce interesing quantum phenomena
connected to entanglement~\cite{wharton2020}. We can wonder if these
few models can be extended to a full theory modelling all the quantum
circuits (and thus, likely, the whole quantum mechanics). I will
provide a generalized Schulman theory, along with two simplifications.
One of these theories is sequential; its associated computational
problem lies in {\bf SampP} and thus I prove that the theory is not
valid. The second, simplified theory is retro-causal. Its associated
computational problem is likely computayionally harder and lies in
{\bf PostSampP*}: it is thus harder, but still not hard enough to make
the theory valid. The full theory escapes the hypothesis of the no-go
theorem by requiring a limit for vanishing parameters. This condition
is actually analogous to the known requirement for fine
tuning~\cite{wood2015}.

It could appear surprising that, here, I am inviting the scientists to
develop models that are hard to calculate, while a practical goal would
be to make the calculation easier. However, we cannot build any
hidden-variable theory that is both easy to calculate and able to model
every quantum circuit: the uderlying idea is that quantum mechanics is
intrinsically hard to calculate, whatever is the used calculation
method, including possible future advancements. This idea is not
supported by theorems but relies on conjectures that are however
widely believed to be true.

Informally, the conjecture is that quantum computers are more powerful
than classical computers. The existing quantum computers are still
toys and do not really show the so-called ``quantum
supremacy''. However, it is believed that, at least in principle, they
can perform calculations that classical computers cannot do in a
practically short time, not even with advanced algorithms. For
example, quantum computers can crack the cryptographic codes used to
certify our cash cards, but we believe that those codes are perfectly
safe, because no classical computer can crack them in reasonable time
and quantum computers are still in their infancy. This reasoning does
not mathematically prove the conjectures on the hardness of simulating
quantum mechanics, but it shows that a part of our life relies on
them.

The paper uses the jargon of computational complexity theory.  For a
general introduction to the topic, I refer the reader to
Ref.~\cite{arora_barak}.


\section{Summary}

Section~\ref{sect:schulman:model} is aimed at giving an example of a
hidden-variable theory. I introduce the Schulman model, some of its
extensions, and a generalization representing a candidate
hidden-variable theory of quantum mechanics. The details of this
theory can be found in Ref.~\cite{wharton2020} and are not repeated
here.

Section~\ref{sect:definition} reports a formal definition of
``theory'', which is used to define the problems of generating samples
according to the models. The validity of the theory is then assessed
based on the hardness of the computational problem for a suitable
quantum computational complexity class. The analysis is then brought forward
to prove a no-go theorem in Sect.~\ref{sect:no:go}.

The no-go theorem is finally used in
Sect.~\ref{sect:no:go:post:selection:schulman} to analyze some
variants of the extended Schulman theory. Results similar to Bell's
theorem and the need for fine tuning are recovered, from this
computational complexity theory approch.

The Sect.~\ref{sect:conclusion} summarizes the findings and suggests
the possible ways to make hidden-variable models viable.

Appendix~\ref{sect:decision} discusses some complexity classes related
to post-selection and proves relations among them. These results are
then used in Appendix~\ref{sect:no:go:sampling:post:selection} to
prove some of the propositions discussed in the paper.


\section{Example of hidden-variable theory: the extended Schulman theory}

\label{sect:schulman:model}

\begin{figure}
  \includegraphics{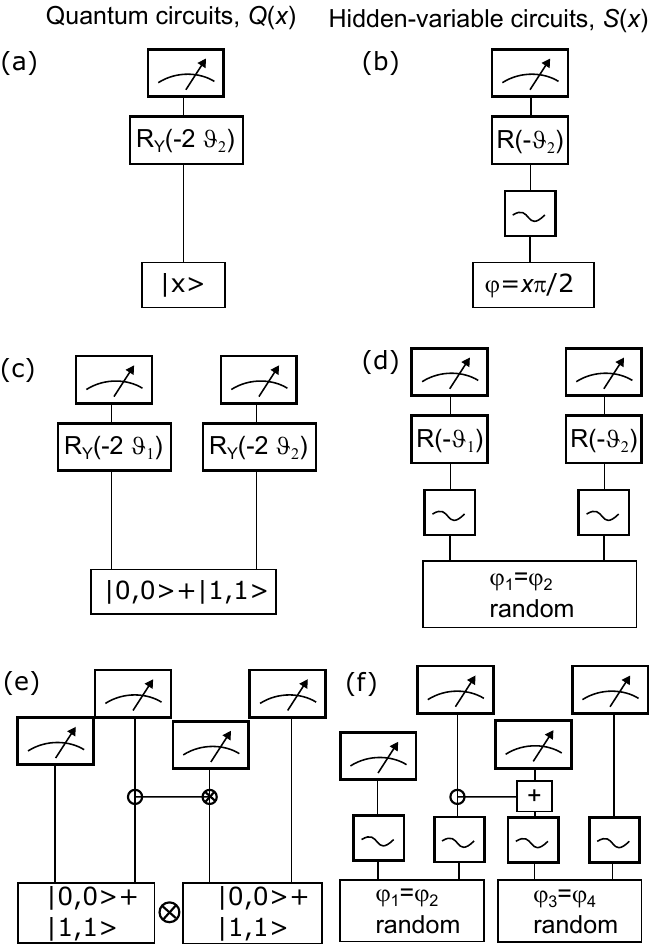}
  \caption{Some quantum circuits and the corresponding models from the
    extended Schulman theory. The direction is bottom up. Panels a and
    b: the input is a single bit $x$. A photon is prepared with a
    polarization angle $\vartheta_1=x \pi/2$. Its polarization is then
    measured at a different angle $\vartheta_2$. Panels c and d: two
    entangled photons are measured at different angles $\vartheta_1$
    and $\vartheta_2$. Panel e and f: an exmaple of more complex
    entanglement (the shown implementation of the CNOT gate is not general
    and only works in some cases).  Panels a, c, and e show the
    quantum circuit, panels b, d, and f the corresponding
    hidden-variable circuit. }
  \label{fig:schulman:schemes}
\end{figure}

\begin{figure}
  \includegraphics{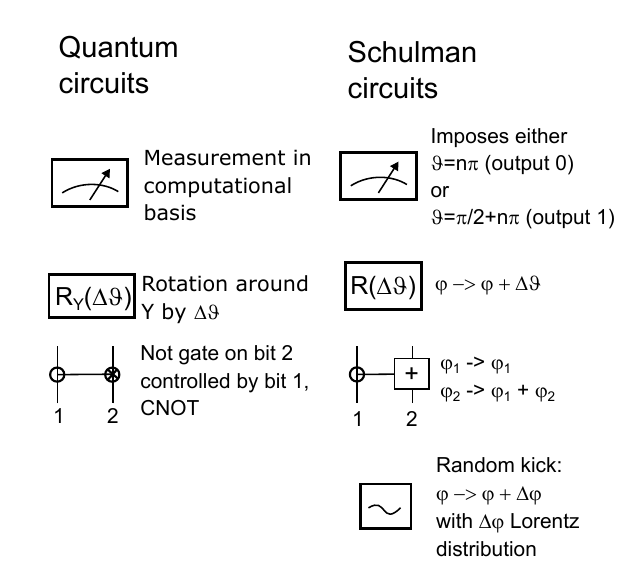}
  \caption{Legend for the symbols used for representing quantum
    circuits in Fig.~\ref{fig:schulman:schemes}.}
  \label{fig:schulman:legend}
\end{figure}

The aim of this section is to provide an example of hidden-variable
theory, which is known to work for some quantum systems.

\subsection{The Schulman model and some variants}

I first describe the Schulman model, a model that successfully
describes a quantum system and can be extended to some additional some
quantum systems.  These models are thoroughly discussed in
Ref.~\cite{wharton2020}.  It is still not clear if suitable extensions
can model every quantum phenomena or every quantum circuits. However,
these models are able to simulate some non-trivial quantum phenomena
and is considered as promising. Generalizing these models, we get a
candidate theory, whose validity will then be checked by means of the
techniques developed in this paper.

The starting point is the simulation of the passage of a polarized
photon through a polarizing beamsplitter. The experiment consists in
starting with a photon polarized at a given angle $\vartheta_1$. The
photon is sent through a polarizing beam splitter, set at an angle
$\vartheta_2$, such that the transmitted (resp., reflected) photons
are polarized at $\vartheta_2$ (resp., at
$\vartheta_2+\pi/2$). Quantum mechanics (actually, in this simple
case, classical optics suffices) gives the probability of having
transmission or reflection; the result is the Malus' law.  Two devices
finally detect the photons in the two branches. The Schulman model is
based on a hidden variable $\varphi$, set to $\varphi=\vartheta_1$ at
the beginning. The propagation of the photon is modelled as a sequence
of random kicks of amplitude $\Delta \varphi$, where $\Delta \varphi$
has a Lorentz distribution with width $\delta \varphi_L$:
\begin{equation}
  \mathcal{P}\left(\Delta \varphi\right)= \frac
          {\delta \varphi_L}
          {\delta \varphi_L^2 + \Delta \varphi^2}
  \label{eq:lorentz}
\end{equation}
The model works in the limit of vanishing width $\delta \varphi_L\to
0$ (this point is crucial and will be thoroughly discussed below). The
passage through the polarizing beam splitter is implemented with the
constraint that the final polarization of the photon, $\varphi$, must
be perfectly aligned with the polarizing beam splitter, i.e. either
$\varphi=\vartheta_2+n\pi$ and the photon is transmitted, or
$\varphi=\vartheta_2+\pi/2+n\pi$ and the photono is reflected. This
model gives the same prediction of the quantum mechanics, i.e. the
Malus law.

Although the evolution of $\varphi$ can be thought of as of
sequential, with a starting value and some sequential modification
events, the measurement setting impose a condition on the
outcoume. The model is defined as retro-causal, i.e. future conditions
influence the present.

Before discussing the variants of this model, I show how it can be
expressed in terms of quantum
circuits. Figure~\ref{fig:schulman:schemes}a graphycally shows a
quantum circuit that implements the above-described physical system.
The input of the quantum circuits defines the initial
polarization. Let us assume that the input $x$ a single bit,
representing either vertical or horizontal polarization.  The
calculation uses a single qubit, which starts in state $|x>$, where
$|0>$ (resp. $|1>$) represents a vertically (resp. horizontally)
polarized photon, $\vartheta_1=0$ (resp. $\vartheta_1=\pi/2$). The
polarizing beam splitter followed by the detectors is represented by
the ``measurement'' (see Fig.~\ref{fig:schulman:legend}). For better
schematizing the circuit, I assume that the measurement is always in
the computational basis, which means vertical vs. horizontal
polarization; the detection of a vertically (resp. horizontally)
polarized photon will be read as an output 0 (resp. 1). In order to
represent a measurement along a rotated direction, the rotation is
rather attributed to the photon itself, just before the measurement;
it is represented by the ``rotation'' gate $R_Y(-2 \vartheta_2)$ (see
Fig.~\ref{fig:schulman:legend}). Notice that the used representation
usually refers to spins, while here I only use the notation for
polarizations.

The Schulman model is graphycally shown in
Fig.~\ref{fig:schulman:schemes}b. The hidden variable is $\varphi$. Is
starts from $\varphi=\vartheta_1=x \pi/2$ (I remind that $x$ is the
single bit of input), i.e. it starts either as a vertically or
horizontally polarized photon. Then, during the propagation, the angle
$\varphi$ receives kicks with a Lorentz distribution. The Schulman
model prescribes several kicks during the propagation, but we already
know that only one of these kicks is statistically relevant, so we
only insert one of them in the circuit, with the ``kick'' symbol, a
box with a $\sim$ (see Fig.~\ref{fig:schulman:legend}). Then we have a
rotation by the angle $-\vartheta_2$. Finally, there is the
measurement; since it detects the vertical vs. horizontal
polarization, it imposes that either $\varphi=n\pi$ and the photon is
transmitted, or $\varphi=\pi/2+n\pi$ and the photono is
reflected. This hidden-variable circuit gives exactly the same
probability distribution of the quantum circuit of
Fig.~\ref{fig:schulman:schemes}a in the limit of vanishing width of
the Lorentz distribution of $\Delta \varphi$, which is simply the
Malus' law.

An extension of the Schulman model to a more complex case is shown in
Fig.~\ref{fig:schulman:schemes}c and d: it refers to two entangled
particles. The quantum circuit is shown in
Fig.~\ref{fig:schulman:schemes}c. It represents one of the versions of
the celebrated Einstein-Podolsky-Rosen thought experiment, where a
correlation at distance appears (althought it does not constitute a
communication of a signal). It is the core of the reasonings about the
Bell's theorem. An extension of the Schulman model, suitable for this
quantum circuit, is represented in Fig.~\ref{fig:schulman:schemes}c.
The polarization angle of the two photons are $\varphi_1$ and
$\varphi_2$, starting from random but equal values
$\varphi_1=\varphi_2$. It must be noticed that one of the two random
kicks could be removed, being statistically irrelevant.

Also in this case, the results of the quantum circuit and of its model
are the same. Hidden-variable theories are forbidden by Bell's theorem
under specific hypothesis, but, actually, the model
Fig.~\ref{fig:schulman:schemes}d works and generates the desired
correlations at distance, which are forbidden by Bell's theorem. This
is possible because this model is retro-causal, thus it violates one
of the hypothesis of Bell's theorem: the values of $\varphi_1$ and
$\varphi_2$ are constrained by the measurement settings, in this case,
by the $\vartheta_1$ and $\vartheta_2$ values, which enter in the
circuit after the random kicks.

A further example of extensions of Schulman model is reported in
Fig.~\ref{fig:schulman:schemes}e and f. It represents a circuit with
four qubits. Two couples of entangled quibits are generated, then they
are entangled by a CNOT gate. The quantum gate is represented, in the
model, by adding the polarization angle of the controlling photon to
the controlled one. This trick works in this specific case but is not
a general implementation of the CNOT gate.

\subsection{Generalizing: the extended Schulman theory}

\label{sect:schulman:theory}

The Schulman model can be extended to even more complex cases (not
reported here) but it has not been applied to a general description of
every quantum circuit. However, it is possible to generalize the
above-described models to get the ``extended Schulman theory'', which
contains all the possible models that we expect to get by extending
the Schulman model.

\begin{figure}
  \includegraphics{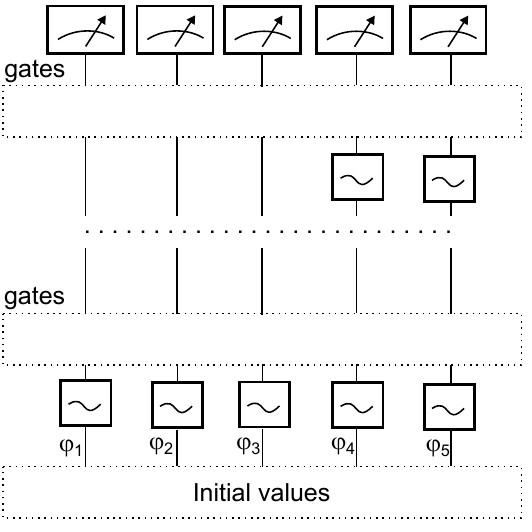}
  \caption{General scheme of the models belonging to the extended
    Schulman theory. The direction is bottom up.}
  \label{fig:general:schulman}
\end{figure}

A general scheme of the models belonging to the extended Schulman
theory is schematically shown in Fig.~\ref{fig:general:schulman}.

The first step is to assign the initial values of the hidden variables
$\varphi_j$; in the examples of Fig.~\ref{fig:schulman:schemes}, the
values can be either assigned to a given value or to a random value;
possibly, the same random value is used to set more than one variable.

After this step, there are two types of layers. One represents gates;
in the hidden-variable model, a gate is actually a deterministic
mapping between the input and the output. The other layer represents
the random kicks, according to the Lorentz distribution,
Eq.~\ref{eq:lorentz}; the random kicks can be applied to some signals
$\varphi_j$ or to all of them. These two layers are alternated in an
arbitrary number $m$. In analogy with the quantum circuits, I assume
that the number of such gates is polynomial in the number of signals
$\varphi_j$.

The last step consists in the measurement, which imposes a constraint
to the possible values of the angles $\varphi_j$: for a given $j$,
either $\varphi_j=\vartheta_j+n\pi$ or
$\varphi_j=\vartheta_j+\pi/2+n\pi$, for an integer $n$.

I will call this theory ``extended Schulman theory'', $T_{eS}$. It is
important to notice that the above definiton only provides the general
mathematical form of the models, without telling us which is the model
of a given quantum circuit. This notwithstanding, I will show that it
is possible to evaluate its validity, by the knowledge of the models,
even without knowing the association between models and quantum
circuits.


\section{Definition of ``theory'' and assessment of validity}

\label{sect:definition}

After giving an example of theory in Sect.~\ref{sect:schulman:model},
hiere I clarify and formally define the meaning of the term
``theory''. The meaning is quite generic: it includes quantum
mechanics itself, but it can even refer to a theory that only applies
to a small set of quantum circuit, or even none (a ``wrong''
theory). Based on the definition, I will formalize the concept of
validity, i.e. the ability of reproducing the results of quantum
mechanics.

\subsection{Formal definition of ``theory''}

A ``theory'' is a collection of models; each model has a
representation as a string and has an output of a fixed number of
bits. The output probability is defined by the model itself.  The
following definition captures the concept of ``theory'' in this sense.

\begin{definition}[Theory]
  A theory $T$ defines a set of models, with associated
  representations as strings. The set of representations of models is
  $S\subseteq\{0,1\}^*$. Each model $s\in S$ has a probabilistic
  output $y$ of size $N(s)$, $y\in \{0,1\}^{N(s)}$.  The output size
  is assumed to be smaller than the size of the representation of the
  model, $N(s)\le |s|$. The theory $T$ defines the probability
  distribution $\mathcal{P}_s(y)$ of the output $y$ of a model $s$.
\end{definition}

Notice that, at the stage of this definition, a theory does not
necessarily associate a model to a quantum circuit. This will be done
in a following step.

Quantum mechanics defines the outputs of quantum circuits. It
represents a first instance of theory, as defined as follows. 

\begin{definition}[Quantum mechanics $T_{QM}$]
  Quantum mechanics $T_{QM}$ is a theory, whose models are quantum
  circuits with their inputs (notice that I am limiting the scope to
  quantum circuits). Their representations are couples $s=(Q,x)$,
  where $Q$ is a representation of the circuit and $x$ is the input
  string. The set of representations is called $S_{QM}$. The
  representation $Q$ encodes the number of qubits $n$ and the number
  of gates $m$, assumed to be one- and two-qubit gates. The number of
  bits of output $N(s)$ equals the number of qubits. The output
  probability $\mathcal{P}^{QM}_{Q,x}(y)$ is defined according to
  quantum mechanics.
\end{definition}

I emphasize that the comparison of a theory $T$ with quantum mechanics
$T_{QM}$ is limited to quantum circuits. Although this limitation
leaves out several natural quantum phenomena, the comparison is
significant because most of the puzzling quantum phenomena can be
observed in suitable quantum circuits.

\subsection{What is a valid theory of quantum mechanics?}

It is now possible to discuss when a theory is valid, i.e. when it
gives the same results of of quantum mechanics, simply based on the
set of models. In order to be valid, we expect that each quantum
circuit $Q$ and input $x$ can be modelled in the theory by a
corresponding model $s$, such that $Q(x)$ and $s$ give the same output
probabilities. This idea is formalized by the following definitions.

\begin{definition}[Valid theory of quantum mechanics]
  \label{def:theory:covers}
  ``The theory $T$ is valid'' (or: is a valid theory of quantum
  mechanics) means that there is a mapping $R$ from the models
  $S_{QM}$ of $T_{QM}$ to the models $S$ of $T$, $R(Q,x)=s \in S$,
  such that the probability distribution of $T$ equals the probability
  distribution of $T_{QM}$:
  \begin{equation}
    \mathcal{P}^{QM}_{(Q,x)}(y) = \mathcal{P}_{R(Q,x)}(y)
  \end{equation}
  The mapping can be calculated in polynomial time in the size of the
  input.
\end{definition}

This definition includes the possibility that there are models of $T$
that do not correspond to any quantum circuit. A one-to-one mapping
would define a theory that is equivalent to quantum mechanics, however
this concept is outside the scope of this paper.

\subsection{Importance of the computational complexity of the mapping}

The mapping $R$ is a part of every hidden-variable theory, although it
is usually not expressed as a mathematical entity, but rather
described as a recipe for building a realization of the model matching
a given physical systems. However, in the literature on
hidden-variable theories, the properties of the reduction typically
remain implicit. Instead here, in Def.~\ref{def:theory:covers}, it is
requested that the mappings can be calculated in polynomial time in
the size of their inputs. This choice must be discuss here, because it
is fundamental for the plan of applying the computational complexity
theory.

For sure, the mapping $R$ must be computable. Moreover,
intuitively, is should be straightforward: no complex calculation
should be involved to devise $s$ from $(Q,x)$.

To see why the calculation of the mapping $R$ must be simple, imagine
a hidden-variable theory in which the model requires a long
calculation to be devised from the quantum circuit, e.g. including the
calculation of the quantum state (wave) function: likely, such a model
would be not deemed as convincing, even if its results are correct.

In the following, the mapping $R$ will be used as a reduction between
problems. A common requirement for a reduction in computational
complexity theory is that the reduction is done in polynomial time in
the input, usually when the involved complexity classes are closed
under polynomial-time reductions. This constraint is reasonable and
matches the intuitive idea of ``being straightforward''. In the
following subsection, the importance of the properties of the
reduction $R$ will become more clear. This explains the
request for a polynomial time reduction in Def.~\ref{def:theory:covers}.

The choice of a polynomial time reduction is admittedly arbitrary,
however it does not limit the validity of the results.  If we consider
the no-go theorems developed in this paper, they will include, as
hypothesis, the necessity of a polynomial time reduction. A
possibility to circumvent the no-go theorems is to develop a
hidden-variable theory in which the reduction, i.e. the correspondence
between the model and the physical systems, is impossible to calculate
than in polynomial time. This is a possibility, even if it is unlikely
that such a model can be considered credible.

\subsection{Definition of sampling problems}

In order to analyze the computational complexity, we must associate a
theory $T$ to a computational problem. First of all, we can chose
among strong or weak simulation~\cite{vandennest2010}: the former aims
at calculating the probabilities of events, the latter to sample the
output, i.e. to generate random samples with the requrested
probability.  I discuss the latter approach.

I start giving the definition of sampling problem, following
Ref.~\cite{aaronson2010}.

\begin{definition}[Sampling problem]
  \label{def:sampling:problem}
  A sampling problem is a collection of probability distributions
  $\mathcal{D}_x(y)$, for $x\in \{0,1\}^*$ and $y\in
  \{0,1\}^{p\left(|x|\right)}$, for some fixed polynomial $p(n)$.
\end{definition}

Informally, the problem is to sample from that distribution.  An
analogy with the probability distributions of the models $m$ of a
theory $T$ is immediately evident. We can thus associate a theory to a
sampling problem as follows.

\begin{definition}[Sampling problem: SAMP$(T)$]
  \label{def:problem:hidden:variable:sampling}
  The sampling problem SAMP$(T)$ is the sampling problem associated to
  a theory $T$.  The family of probability distributions of the
  sampling problem, $\mathcal{D}_x(y)$, corresponds to the probability
  distributions of the theory $T$, $\mathcal{P}_s(y)$, i.e. a
  probability distribution of the ourput $y$ for each model $s\in S$.
\end{definition}

The sampling problem associated to quantum circuits is SAMP$(T_{QM})$.

\subsection{Assessment of validity}

We can assess the validity of a theory $T$ comparing its
associated sampling problem, SAMP$(T)$, with the sampling problem
associated to quantum circuits, SAMP$(T_{QM})$. The idea is formalized
in the following proposition.

\begin{proposition}
  \label{prop:valid:reduce:samp:t:qm}
   If a theory $T$ is valid, then SAMP$(T_{QM})$ can be reduced to
   SAMP$(T)$.
\end{proposition}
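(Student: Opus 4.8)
The plan is to use the polynomial-time mapping $R$ supplied by the hypothesis of validity directly as the reduction between the two sampling problems. Recall (following Ref.~\cite{aaronson2010}) that a sampling problem $\mathcal{A}$ reduces to a sampling problem $\mathcal{B}$ if there is a polynomial-time computable map $f$ on instances such that, given any exact (or $\varepsilon$-approximate) sampler for $\mathcal{B}$, the composition ``compute $f(x)$, then invoke the $\mathcal{B}$-sampler on $f(x)$'' is an exact (or $\varepsilon$-approximate) sampler for $\mathcal{A}$. So it suffices to exhibit such an $f$ from the instances of SAMP$(T_{QM})$ to the instances of SAMP$(T)$.

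First I would take $f$ to be $R$ itself: an instance of SAMP$(T_{QM})$ is a string which, in the relevant case, has the form $(Q,x)\in S_{QM}$, and I set $f(Q,x)=R(Q,x)=s\in S$, which is a legitimate instance of SAMP$(T)$. By Def.~\ref{def:theory:covers}, $R$ is computable in time polynomial in $|(Q,x)|$, hence $f$ is a valid polynomial-time reduction map; moreover $|s|$ is polynomially bounded in $|(Q,x)|$, so instance sizes blow up at most polynomially. (Instances of SAMP$(T_{QM})$ that are not of the form $(Q,x)$ carry a trivial distribution and can be mapped to any fixed instance; this case is vacuous.)

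Next I would verify correctness. By the definition of SAMP$(T_{QM})$ the target distribution on instance $(Q,x)$ is $\mathcal{D}_{(Q,x)}(y)=\mathcal{P}^{QM}_{(Q,x)}(y)$, and by the definition of SAMP$(T)$ the distribution produced on instance $s$ is $\mathcal{P}_s(y)$. The validity hypothesis gives precisely
\begin{equation}
  \mathcal{P}_{R(Q,x)}(y)=\mathcal{P}^{QM}_{(Q,x)}(y),
\end{equation}
so the two distributions coincide; in particular their total-variation distance is zero. Hence an exact sampler for SAMP$(T)$ yields an exact sampler for SAMP$(T_{QM})$, and an $\varepsilon$-approximate one yields an $\varepsilon$-approximate one, since the map applied to the sampled output is the identity and total-variation distance is unchanged by it.

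The only point requiring care --- and the main (mild) obstacle --- is the bookkeeping of output lengths: Def.~\ref{def:sampling:problem} fixes the output length to a single polynomial $p(|x|)$, whereas the models of $T$ have outputs of varying size $N(s)\le|s|$, and the number of qubits of $Q$ varies as well. I would handle this by padding every output to the common length given by the polynomial size bound with a fixed deterministic rule; padding a sample from $\mathcal{P}_s$ and one from $\mathcal{P}^{QM}_{(Q,x)}$ in the same way leaves the equality of distributions intact, so that both SAMP$(T)$ and SAMP$(T_{QM})$ literally fit Def.~\ref{def:sampling:problem} and the reduction goes through verbatim. Apart from this formal matching, the proposition is an immediate consequence of the definition of validity.
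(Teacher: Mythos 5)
Your proposal is correct and takes exactly the same route as the paper, whose entire proof is the one-line observation that the reduction is the mapping $R$ from Def.~\ref{def:theory:covers}; you have simply spelled out the polynomial-time and distribution-equality checks that the paper leaves implicit. The padding remark is a reasonable extra piece of bookkeeping but does not change the argument.
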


\begin{proof}
  The proof relies on the identification of the reduction with the
  mapping between $T$ and $T_{QM}$.
\end{proof}

It is worth noting that the converse is not true. Indees, the
reducibility of SAMP$(T_{QM})$ to SAMP$(T)$ implies the presence of a
reduction $R$, which is also a mapping between $T_{QM}$ and $T$;
however, we cannot ensure that this mapping is meaningful as a
representation of quantum circuits or it is only a mathematical
correspondence.

An important class of sampling problems is {\bf SampBQP}. I give the
definition following Ref.~\cite{aaronson2010} with minor changes.

\begin{definition}[{\bf SampBQP} sampling class]
  \label{def:samp:bqp}
  The class {\bf SampBQP} is the class of sampling problems
  $\mathcal{D}_x(y)$, for $x\in \{0,1\}^*$ for which there exists a
  polynomial-time uniform family of quantum circuits $B_n$ that:
  \begin{enumerate}
  \item Takes an input $i=(x,0^{1/\epsilon})$;
  \item The circuit $B_{|i|}(i)$ returns a string $y\in
    \{0,1\}^{p\left(|x|\right)}$;
  \item The probability of getting a $y$ on a given input $x$,
    $\mathcal{C}_x(y)$, is close to $\mathcal{D}_x(y)$ as:
    \begin{equation}
      \sum_{y\in \{0,1\}^{p\left(|x|\right)}}
      \left| \mathcal{C}_x(y) - \mathcal{D}_x(y) \right| \le \epsilon
    \end{equation}
  \end{enumerate}
\end{definition}

Informally, this class approximates a probability distribution with
the output of a quantum circuit, thus, the problem of sampling from
quantum circuits, i.e. the models of $T_{QM}$, is complete for
it. We can thus rephrase the necessary condition of validity in terms of
hardness of the sampling problem for the class {\bf SampBQP}.

\begin{proposition}
  If a theory $T$ is valid (it is a valid theory of quantum
  mechanics), then SAMP$(T)$ is hard for {\bf SampBQP}.
\end{proposition}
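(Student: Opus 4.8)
\section*{Proof proposal}

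The plan is to obtain the statement as an immediate consequence of two ingredients: (i) the sampling problem attached to quantum circuits, SAMP$(T_{QM})$, is itself hard for {\bf SampBQP} (in fact it is complete for it, which is essentially how Def.~\ref{def:samp:bqp} is set up), and (ii) Proposition~\ref{prop:valid:reduce:samp:t:qm}, which says that the validity of $T$ furnishes a polynomial-time reduction from SAMP$(T_{QM})$ to SAMP$(T)$. Hardness for a complexity class is preserved under composition of reductions, so chaining the generic reduction ``arbitrary {\bf SampBQP} problem $\to$ SAMP$(T_{QM})$'' with the reduction ``SAMP$(T_{QM}) \to$ SAMP$(T)$'' yields the desired conclusion.

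For ingredient (i), I would unpack Def.~\ref{def:samp:bqp}. Let $\mathcal{D}_x(y)$ be any sampling problem in {\bf SampBQP}, with associated polynomial-time uniform family $B_n$ taking inputs $i=(x,0^{1/\epsilon})$ and returning $y\in\{0,1\}^{p(|x|)}$ whose output distribution $\mathcal{C}_x$ is within total-variation distance $\epsilon$ of $\mathcal{D}_x$. Given $(x,\epsilon)$, the reduction uses uniformity to compute, in time polynomial in $|x|$ and $1/\epsilon$, the string $s=(Q,x')\in S_{QM}$ where $Q$ is the description of $B_{|i|}$ (padded, if necessary, so that $|s|\ge N(s)$, respecting the convention $N(s)\le|s|$) and $x'=i$. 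Then $s$ is a model of $T_{QM}$, and a single query to a sampler for SAMP$(T_{QM})$ on $s$ returns a sample from $\mathcal{P}^{QM}_{(Q,x')}=\mathcal{C}_x$, which is $\epsilon$-close to $\mathcal{D}_x$. This is exactly a polynomial-time reduction of $\mathcal{D}$ to SAMP$(T_{QM})$ in the approximate-sampling sense, so SAMP$(T_{QM})$ is {\bf SampBQP}-hard.

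For ingredient (ii) and the conclusion: assume $T$ is valid. By Proposition~\ref{prop:valid:reduce:samp:t:qm} there is a polynomial-time reduction from SAMP$(T_{QM})$ to SAMP$(T)$ (it is the mapping $R$ of Def.~\ref{def:theory:covers}, which sends a model $s\in S_{QM}$ to $R(s)\in S$ with $\mathcal{P}_{R(s)}=\mathcal{P}^{QM}_{s}$). Composing this with the reduction built in ingredient (i), and allocating the error budget (say $\epsilon/2$ to the {\bf SampBQP} circuit and $\epsilon/2$ to the sampler for SAMP$(T)$, using that sampling errors add under composition), we get for every $\mathcal{D}\in${\bf SampBQP} a polynomial-time reduction to SAMP$(T)$ achieving any desired accuracy. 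Hence SAMP$(T)$ is hard for {\bf SampBQP}.

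The main obstacle is ingredient (i): not the idea, which is routine, but the bookkeeping. One has to pin down the precise notion of reduction under which ``hard for {\bf SampBQP}'' is meant (oracle access to an approximate sampler with tunable error), verify that the error parameter $\epsilon$ can be carried correctly through both stages of the composition, and check that the circuits $B_n$ can be repackaged as legitimate elements of $S_{QM}$ — in particular respecting the format $s=(Q,x)$ and the size constraint $N(s)\le|s|$, which may require innocuous padding of the circuit description. Once these conventions are fixed, the proposition follows by transitivity of reductions together with the {\bf SampBQP}-completeness of quantum-circuit sampling.
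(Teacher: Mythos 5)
Your proposal is correct and follows essentially the same route as the paper: establish that SAMP$(T_{QM})$ is {\bf SampBQP}-hard (indeed complete) because the approximating circuit family $B_n$ of any {\bf SampBQP} problem is itself a subfamily of quantum-circuit models, then compose with the reduction from SAMP$(T_{QM})$ to SAMP$(T)$ supplied by Proposition~\ref{prop:valid:reduce:samp:t:qm}. Your version merely spells out the bookkeeping (packaging $B_{|i|}$ as an element of $S_{QM}$, error allocation) that the paper leaves implicit; note only that the second leg of the composition is exact by Definition~\ref{def:theory:covers}, so no error budget needs to be spent there.
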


\begin{proof}
  According to Def.~\ref{prop:valid:reduce:samp:t:qm}, since the
  theory $T$ is valid, then SAMP$(T_{QM})$ can be reduced to
  SAMP$(T)$.  It is then enough to show that SAMP$(T_{QM})$ is hard
  for {\bf SampBQP} (actually, it is even complete). This is proven by
  noticing that the class {\bf SampBQP} associates to each sampling
  problem $\mathcal{D}_x(y)$ a sampling problem $\mathcal{C}_x(y)$,
  corresponding to the sampling of the family $B_n$ of circuits. This
  latter problem is a subset of SAMP$(T_{QM})$.
\end{proof}

The results above state that it is possible to rule out a theory by
analyzing its associated sampling problem and determining its hardness
for {\bf SampBQP}. A theory with a too weak associated sampling
problem is for sure not valid.

It is worth remarking that this analysis is done without the need of
knowing the association between quantum circuits and the models of the
theory: it is enough to have a mathematical description of the set of
models.


\section{No-go theorems based on polynomial-time classical computation and possible post-selection}

\label{sect:no:go}

In this section I derive a no-go theorem that rules out some theories
based on sequential classical computation, even in the presence of
post-selection. The theorem is based on the result of
Sect.~\ref{sect:definition} and relies on widely accepted conjectures.
The idea is to prove that a problem is \emph{not} hard for
class {\bf SampBQP} by proving that it belongs to a class {\bf X},
such that:
\begin{itemize}
\item {\bf X} does not contain {\bf SampBQP} (i.e. there is some problem of
  {\bf SampBQP} that is not inside {\bf Y}), and
\item {\bf X} is closed under polynomial-time reductions (i.e. any
  problem that can be polynomial-time reduced to a problem in {\bf X}
  is in {\bf X}).
\end{itemize}
In turn, the lack of hardness for {\bf SampBQP} prevents the theory to
be valid.

\subsection{Class expressing the idea of sequential calculation}

In sequential models, the hidden variables take a value at $t=0$ and
evolve, step by step, to the final time, at which they are evaluated.
For models of circuits, a reasonable assumption, similar to locality,
is that the calculation of the next step involves groups of variable
of fixed size. Intuitively, the models of such a theory $T$ can be
sampled in polynomial time in the size of the model representation.
This kind of theories is defined as ``sequential'' and they are
causal.

To formalize this idea, I provide a computational complexity class in
which SAMP$(T)$ of a sequential theory should fall, {\bf SampP}.  I
define it following Ref.~\cite{aaronson2010}.

\begin{definition}[{\bf SampP} sampling class]
  The class {\bf SampP} is the class of sampling problems
  $S=\left\{ \mathcal{D}_x(y) \, | \, x\in \{0,1\}^* \right\}$ for which
  there exists a probabilistic polynomial-time algorithm $B$ that:
  \begin{enumerate}
    \item Takes an input $x$,$0^{1/\epsilon}$ (notice that $0^k$ means a
      string of $k$ 0s);
    \item Returns a string $y\in \{0,1\}^{p\left(|x|\right)}$;
    \item The probability of getting a $y$ on a given input $x$,
      $\mathcal{C}_x(y)$, is close to $\mathcal{D}_x(y)$ as:
      \begin{equation}
        \sum_{y\in \{0,1\}^{p\left(|x|\right)}}
        \left| \mathcal{C}_x(y) - \mathcal{D}_x(y) \right| \le \epsilon
      \end{equation}
  \end{enumerate}
\end{definition}

This class is the classical analogous of {\bf SampBQP},
Def.~\ref{def:samp:bqp}. It captures the idea of generating samples,
according to a prescribed model, with an algorithm operating in
polynomial time in the length of its input, matching a given
probability distribution within a given error $\epsilon$.

According to a widely accepted conjecture on quantum computational
complexity classes, this class does not include {\bf SampBQP}.

\begin{proposition}
  \label{prop:sampp:not:circuit}
  {\bf SampBQP} $\nsubseteq$ {\bf SampP}. This proposition is based on
  the conjecture that {\bf BQP} is not included in {\bf BPP}.
\end{proposition}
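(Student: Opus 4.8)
The plan is to prove the contrapositive in the natural way: assume \textbf{SampBQP} $\subseteq$ \textbf{SampP} and derive \textbf{BQP} $\subseteq$ \textbf{BPP}, which contradicts the cited conjecture. The first step is to take an arbitrary language $L \in$ \textbf{BQP}, so there is a polynomial-time uniform family of quantum circuits $\{Q_n\}$ that on input $x$ outputs a single bit which equals $L(x)$ with probability at least $2/3$. I would package this as a sampling problem: define $\mathcal{D}_x(y)$ on $y \in \{0,1\}$ to be exactly the output distribution of $Q_{|x|}(x)$. Because the circuit family is polynomial-time uniform and has bounded width/depth, this family of distributions lies in \textbf{SampBQP} (take the $B_n$ of Definition~\ref{def:samp:bqp} to be $Q_n$ itself, ignoring the $0^{1/\epsilon}$ padding since the distribution is reproduced exactly).

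By the assumption \textbf{SampBQP} $\subseteq$ \textbf{SampP}, there is a probabilistic polynomial-time algorithm $B$ that, given $x$ and $0^{1/\epsilon}$, outputs a bit $y$ with distribution $\mathcal{C}_x$ satisfying $\sum_y |\mathcal{C}_x(y) - \mathcal{D}_x(y)| \le \epsilon$. The next step is to fix $\epsilon$ to a small constant, say $\epsilon = 1/12$: then the total-variation error is bounded, so $\Pr[B(x) = L(x)] \ge 2/3 - 1/12 > 1/2$. Running $B$ on $x$ (with the fixed constant padding $0^{12}$) and outputting its bit thus gives a \textbf{BPP}-type algorithm for $L$ with a nontrivial gap above $1/2$; standard amplification by majority vote over polynomially many independent repetitions boosts the success probability above $2/3$, placing $L \in$ \textbf{BPP}. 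Since $L$ was arbitrary, \textbf{BQP} $\subseteq$ \textbf{BPP}, contradicting the conjecture, and hence \textbf{SampBQP} $\nsubseteq$ \textbf{SampP}.

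The step that needs the most care is the reduction from the decision problem to the sampling problem, i.e. verifying that the single-bit output distribution of a \textbf{BQP} computation genuinely qualifies as a \textbf{SampBQP} instance under Definition~\ref{def:samp:bqp}; this is where one must check that the polynomial $p$, the uniformity condition, and the role of the accuracy parameter $\epsilon$ are all compatible — in particular that we are allowed to produce the target distribution \emph{exactly} rather than only approximately, which makes the $0^{1/\epsilon}$ input irrelevant here. One should also note the mild subtlety that \textbf{SampBQP} and \textbf{SampP} are classes of \emph{approximate} sampling problems, so the inclusion hypothesis only gives an $\epsilon$-approximate sampler; the fix, as above, is simply to pick $\epsilon$ a small enough constant that the $2/3$ bounded-error gap of the original \textbf{BQP} circuit is not destroyed. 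Everything else — the reuse of $Q_n$ as the witnessing quantum circuit family, and the final amplification — is routine.
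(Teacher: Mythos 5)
Your proposal is correct and follows essentially the same route as the paper's proof: both reduce a \textbf{BQP} language's single-bit output distribution to a \textbf{SampBQP} sampling problem, then show that an $\epsilon$-approximate classical sampler with $\epsilon$ a small enough constant would decide the language in \textbf{BPP}. The only difference is presentational — you argue the contrapositive for an arbitrary $L \in \textbf{BQP}$ and make the membership of the distribution in \textbf{SampBQP} explicit, whereas the paper fixes a witness $L \in \textbf{BQP} \setminus \textbf{BPP}$ and leaves that membership implicit.
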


The proof is given in Appendix~\ref{sect:no:go:sampling:post:selection}.

\subsection{Class expressing the idea of post-selection}

The idea of post-selection is that a probabilistic algorithm first
generates a sample $y$, then it decides whether to accept or reject
it. As an example, Monte Carlo rejection sampling algorithms work in
this way. In the context of hidden-variable models, we can also see an
analogy with the retro-causality induced by the measurement
setting. For example, in Schulman model we assign starting values to
the $\varphi_j$, we make them sequentially evolve, through
deterministic steps and probabilistic kicks, but, in the end, the
measurement imposes a selection.

The idea of an algorithm that generates samples by sequential
calculation followed by post-selection is expressed by the following
definition.

\begin{definition}[{\bf PostSampP} sampling class]
  \label{def:post:samp:p:lenient}
  The class {\bf PostSampP} is the class of sampling problems
  $S=\left\{ \mathcal{D}_x(y) \, | \, x\in \{0,1\}^* \right\}$ for which
  there exists a probabilistic polynomial-time algorithm $B$ that:
  \begin{enumerate}
    \item Takes an input $x$,$0^{1/\epsilon}$ (notice that $0^k$ means a
      string of $k$ 0s);
    \item Either returns a string $y\in \{0,1\}^{p\left(|x|\right)}$
      or 'FAILED';
    \item The probability of getting a valid string (i.e. not
      returning 'FAILED') does not vanish;
    \item Called $\mathcal{C}_x(y)$ the probability of getting a $y$ on a given input $x$, conditioned to get a valid string (i.e. not
      returning 'FAILED'), is close to $\mathcal{D}_x(y)$ as:
      \begin{equation}
        \sum_{y\in \{0,1\}^{p\left(|x|\right)}}
          \left| \mathcal{C}_x(y) - \mathcal{D}_x(y) \right| \le \epsilon
      \end{equation}
  \end{enumerate}
\end{definition}

It is easy to see that a post-selection procedure fits in
Def.~\ref{def:post:samp:p:lenient}: in case of acceptance, the
algorithm outputs the generated sample $y$, else, it outputs 'FAILED'.

The reverse also holds. Consider any algorithm $M$ that matches the
Def.~\ref{def:post:samp:p:lenient}: it either returns a sample $y$ or
'FAILED'. We can tweak it adding a fake post-selection, simply
returning $y$ when $M$ returns a valid sample $y$ and reporting the
failure when $M$ output 'FAILED' (The concept is so simple that it is
even hard to explain).

We thus conclude that the algorithms $M$ of
Def.~\ref{def:post:samp:p:lenient} actually describe post-selection,
even if the operation of post-selecting is not explicitly enforced in
the definition.

Although a post-selection algorithm could appear ``lazy'', being
allowed to refuse to give an output, the overall power of
post-selection algorithms is more than equivalent algorithms without
post-selection. To understand this point, it must be noticed that each
run of the algorithm takes polynomial time in $|x|$, but getting a
valid sample $y$, i.e. not 'FAILED', requires to iterate the algorithm
several times. Possibly, this leads to very long calculation time,
much longer than polynomial. For this reason, the post- versions of
the classes are expected to be harder to calculate than the non-post-
versions.

It is possible to prove that this class does not include {\bf
  SampBQP}, based on a conjecture on quantum computational complexity
classes. This conjecture is quite strong but it is supported by a few
clues~\cite{aaronson2010bis}.

\begin{proposition}
  \label{prop:postsampp:not:circuit}
  {\bf SampBQP} $\nsubseteq$ {\bf PostSampP}. This proposition is
  based on the conjecture that {\bf BQP} is not included in {\bf PH}.
\end{proposition}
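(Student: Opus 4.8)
The plan is to mirror the proof of Proposition~\ref{prop:sampp:not:circuit}, replacing the classical class {\bf BPP} by its post-selected analogue. The key observation is that a probabilistic polynomial-time algorithm that is allowed to output 'FAILED' and whose answer is read off conditioned on success is exactly the machine model behind the decision class {\bf PostBPP} (also known as {\bf BPP}$_{\mathrm{path}}$). So I would argue: if {\bf SampBQP} $\subseteq$ {\bf PostSampP}, then {\bf BQP} $\subseteq$ {\bf PostBPP}; since it is known that {\bf PostBPP} is contained in {\bf PH} (in fact in its third level), this would give {\bf BQP} $\subseteq$ {\bf PH}, contradicting the conjecture. The containment {\bf PostBPP} $\subseteq$ {\bf PH} is the one nontrivial classical ingredient; it is exactly the kind of relation among post-selection classes that is proved in Appendix~\ref{sect:decision}, and I would invoke it there.

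The reduction from a {\bf BQP} language to a {\bf SampBQP} sampling problem is the same as in Proposition~\ref{prop:sampp:not:circuit}. Let $L\in${\bf BQP} and fix a polynomial-time uniform family $Q_n$ of quantum circuits whose first output bit equals $L(x)$ with probability at least $2/3$. The family of output distributions $\mathcal{D}_x(y)=\left|\langle y | Q_{|x|}| x\rangle\right|^2$ is, trivially and with zero error, a problem in {\bf SampBQP}. If {\bf SampBQP} $\subseteq$ {\bf PostSampP}, there is a probabilistic polynomial-time algorithm $B$ on input $(x,0^{1/\epsilon})$ that outputs either $y$ or 'FAILED', succeeds with nonvanishing probability, and whose output conditioned on success is within $\epsilon$ in $\ell_1$ of $\mathcal{D}_x$. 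Choosing $\epsilon$ a small constant, say $\epsilon=1/12$, the first-bit marginal of $B$'s conditioned output is within $\epsilon$ of the ideal one, so conditioned on success the first bit equals $L(x)$ with probability at least $2/3-1/12=7/12$ when $x\in L$ and at most $5/12$ when $x\notin L$. Outputting the first bit of $y$ (and 'FAILED' when $B$ does) is thus a {\bf PostBPP} computation for $L$ with a constant gap around $1/2$; standard amplification — run $B$ independently several times, post-select on \emph{all} runs succeeding (the joint success probability is still nonvanishing), and take the majority bit — boosts the gap and shows $L\in${\bf PostBPP}.

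The main obstacle is twofold. First, one must check that the ``lenient'' class {\bf PostSampP} of Definition~\ref{def:post:samp:p:lenient}, in which the success probability is only required to be nonzero (and may be exponentially small), really corresponds to {\bf PostBPP} rather than to some weaker class; but this very feature is what makes {\bf PostBPP} powerful, and it is what the matching is built around. Second, and more substantively, everything rests on the classical inclusion {\bf PostBPP}$\subseteq${\bf PH} together with the robustness of {\bf PostBPP} under amplification; establishing (or correctly quoting) these facts is the real content, and I would carry it out in Appendix~\ref{sect:decision}. With that in hand, the contradiction with the conjecture ``{\bf BQP} $\nsubseteq$ {\bf PH}'' is immediate, and the complete argument is given in Appendix~\ref{sect:no:go:sampling:post:selection}.
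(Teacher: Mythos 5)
Your proposal is correct and follows essentially the same route as the paper: both arguments reduce the sampling statement to the decision-class separation, showing that {\bf SampBQP} $\subseteq$ {\bf PostSampP} would place every {\bf BQP} language in {\bf PostBPP} $\subseteq$ {\bf PH} (the paper packages this as Prop.~\ref{prop:postbpp:bqp}, proved in Appendix~\ref{sect:decision} via {\bf PostBPP} $=$ {\bf BPP$_{path}$} $\subseteq$ {\bf BPP}$^{\mathrm{NP}}$ $\subseteq$ {\bf PH}), contradicting the conjecture. Your explicit first-bit marginal bound and majority amplification correspond to the paper's appeal to the arbitrariness of the $1/3$ error threshold for {\bf PostBPP}.
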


The proof is given in
Appendix~\ref{sect:no:go:sampling:post:selection}.

\subsection{A more severe class expressing the idea of post-selection}

The two sampling classes {\bf SampP} and {\bf PostSampP} specify the
error as additive (as opposed to a multiplicative error, representing
a fraction of the value). Due to this condition on the error, the
class {\bf PostSampP} can be considered ``lenient'' with respect the
following, ``severe'' class, having a stricter condition on the
error. The reason for introducing this severe version is that the
proofs related to this severe version are based on a weaker
conjecture, only based on classical computation classes.

\begin{definition}[{\bf PostSampP*} sampling class, severe]
  \label{def:post:samp:p:severe}
  The class {\bf PostSampP*} is defined in the same way of {\bf
    PostSampP}, except for the error definition, which is substituted
  by the following inequality.
  \begin{equation}
    \left| \frac{\mathcal{C}_x(y)}{\mathcal{D}_x(y)} -1 \right| \le \epsilon
  \end{equation}
  for every $y\in \{0,1\}^{p\left(|x|\right)}$. Notice that
  $\mathcal{C}_x(y)$ is already conditioned to have a valid sample,
  which means that the output is not 'FAILED'.
\end{definition}

According to a widely accepted conjecture, this class does not include
{\bf SampBQP}. At variance with the proof of
Prop.~\ref{prop:postsampp:not:circuit}, which refers to {\bf
  PostSampP}, the requrested conjecture only relies on classical
computational complexity classes; moreover, the conjecture is widely
accepted.

\begin{proposition}
  \label{prop:postsampp:star:not:circuit}
  {\bf SampBQP} $\nsubseteq$ {\bf PostSampP*}.  This proposition
  is based on the conjecture that the polynomial hierarchy does not
  collapse.
\end{proposition}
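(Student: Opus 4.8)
The plan is to reduce Proposition~\ref{prop:postsampp:star:not:circuit} to a statement about \emph{decision} classes, exploiting the fact that the multiplicative (``severe'') error bound in Def.~\ref{def:post:samp:p:severe} is strong enough to transfer hardness results from sampling to decision problems with essentially no loss. Concretely, I would first observe that if $\textbf{SampBQP}\subseteq\textbf{PostSampP*}$, then in particular every sampling problem that is complete for $\textbf{SampBQP}$ — i.e. $\mathrm{SAMP}(T_{QM})$, the problem of sampling from quantum circuits — would lie in $\textbf{PostSampP*}$. The point of the multiplicative error is that a $\textbf{PostSampP*}$ algorithm reproduces the target distribution so faithfully that one can read off, up to small relative error, the probabilities of individual outcomes; this is exactly the regime in which sampling-to-decision reductions work (this is the standard mechanism behind the Aaronson--Arkhipov style arguments, and is why the paper flags that ``the proofs related to this severe version are based on a weaker conjecture''). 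So the first step is: show $\textbf{SampBQP}\subseteq\textbf{PostSampP*}$ implies a collapse at the level of the corresponding decision/counting classes.

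The second step is to identify which decision class a $\textbf{PostSampP*}$ algorithm lands in. A $\textbf{PostSampP*}$ algorithm is a $\textbf{BPP}$ machine with post-selection, and post-selected bounded-error probabilistic polynomial time is exactly $\textbf{PostBPP}$ (also written $\textbf{BPP}_{\mathrm{path}}$), which sits inside the third level of the polynomial hierarchy, $\textbf{BPP}_{\mathrm{path}}\subseteq\textbf{P}^{\Sigma_2^p}$, and more to the point is contained in $\textbf{PH}$. Meanwhile, exact or near-exact access to quantum circuit output probabilities lets one decide problems that are hard for $\textbf{PP}$ (by Aaronson's $\textbf{PostBQP}=\textbf{PP}$ theorem, or more elementarily because amplitudes of quantum circuits encode $\textbf{GapP}$ quantities). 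The multiplicative-error hypothesis is what upgrades ``we can sample from a quantum circuit'' to ``we can estimate $\#\mathrm{P}$-type quantities with a $\textbf{PostBPP}$ machine,'' hence $\textbf{PP}\subseteq\textbf{PH}$ (or $\textbf{P}^{\#\mathrm{P}}\subseteq\textbf{PH}$). By Toda's theorem $\textbf{PH}\subseteq\textbf{P}^{\#\mathrm{P}}$, so this forces $\textbf{PH}=\textbf{P}^{\#\mathrm{P}}$, and standard padding/Karp--Lipton-style arguments then push this down to a collapse of the polynomial hierarchy itself — contradicting the hypothesis of the proposition. I would cite Appendix~\ref{sect:decision}, where (per the excerpt's own roadmap) the relevant post-selection decision classes and their inclusions are developed, and invoke those lemmas rather than re-deriving them.

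Assembling the argument: assume for contradiction $\textbf{SampBQP}\subseteq\textbf{PostSampP*}$; apply it to $\mathrm{SAMP}(T_{QM})$; use the multiplicative error bound to extract, via a $\textbf{PostBPP}$ procedure, good relative-error estimates of the output probabilities of an arbitrary polynomial-size quantum circuit; encode a $\textbf{PP}$-hard (equivalently, a $\#\mathrm{P}$-flavored) quantity into such a probability in the standard way; conclude $\textbf{PP}\subseteq\textbf{PostBPP}\subseteq\textbf{PH}$; invoke Toda plus a collapse argument to get that the polynomial hierarchy collapses; contradiction. The closure of $\textbf{PostSampP*}$ under polynomial-time reductions (the property singled out at the start of Sect.~\ref{sect:no:go}) is what legitimizes restricting attention to the complete problem $\mathrm{SAMP}(T_{QM})$ in the first step.

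The main obstacle is the relative-error extraction step — making precise that a multiplicative-$\epsilon$ sampler, run with inverse-polynomially small $\epsilon$, really does yield usable estimates of \emph{individual} amplitudes/probabilities even when those probabilities are exponentially small. One has to be careful that the post-selection event is not itself exponentially rare in a way that breaks the $\textbf{PostBPP}$ simulation, and that the encoding of the hard counting quantity places it in a probability that the sampler is required to approximate well; with additive error this fails (which is precisely why Prop.~\ref{prop:postsampp:not:circuit} needs the stronger $\textbf{BQP}\not\subseteq\textbf{PH}$ conjecture), but with multiplicative error it goes through. I expect the cleanest route is to reduce directly to the machinery already set up in Appendix~\ref{sect:decision} for the post-selected decision classes, so that the present proof becomes a short wrapper: ``sampling with multiplicative error $\Rightarrow$ the associated decision problem is in $\textbf{PostBPP}$; quantum circuit sampling with multiplicative error would then put a $\textbf{PP}$-hard decision problem in $\textbf{PH}$; done by Toda.''
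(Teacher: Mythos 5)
Your scaffolding matches the paper's: both arguments ultimately rest on $\textbf{PostBQP}=\textbf{PP}$, Toda's theorem, and $\textbf{PostBPP}=\textbf{BPP}_{\mathrm{path}}\subseteq\textbf{PH}$, so that non-collapse of $\textbf{PH}$ separates $\textbf{PostBQP}$ from $\textbf{PostBPP}$ (this is exactly Prop.~\ref{prop:postbpp:postbqp} in Appendix~\ref{sect:decision}). But the step you yourself flag as ``the main obstacle'' --- turning a multiplicative-error sampler into ``good relative-error estimates of the output probabilities'' obtainable ``via a \textbf{PostBPP} procedure'' --- is a genuine gap, and the tool you reach for is the wrong one. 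A \textbf{PostBPP} machine decides languages; it does not output numerical probability estimates, and estimating an exponentially small output probability of a classical sampler to relative accuracy requires Stockmeyer-style approximate counting with an \textbf{NP} oracle, which is the machinery of the \emph{additive}-error (Aaronson--Arkhipov) setting, not anything developed in Appendix~\ref{sect:decision}. Your proposal conflates the two standard hardness-of-sampling mechanisms, and the one it gestures at is not the one that makes the multiplicative case easy.

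The paper's actual mechanism never estimates a probability. Take $L\in\textbf{PostBQP}\setminus\textbf{PostBPP}$ and let $F$ be the problem of sampling the two output bits $(y_{\mathrm{sample}},y_{\mathrm{valid}})$ of the \textbf{PostBQP} circuits for $L$; this $F$ is trivially in \textbf{SampBQP}, because no post-selection is performed quantumly. If $F$ were in \textbf{PostSampP*}, composing the classical sampler with one extra classical rejection step (return 'FAILED' whenever $y_{\mathrm{valid}}=0$) yields a \textbf{PostBPP} algorithm for $L$, a contradiction. The only nontrivial point is that the pointwise multiplicative bound of Def.~\ref{def:post:samp:p:severe} survives conditioning on the possibly exponentially rare event $y_{\mathrm{valid}}=1$, becoming an additive bound $2\epsilon/(1-\epsilon)$ on the conditional distribution; that is Lemma~\ref{lemma:multiplicative:error:conditioned:probability}, and it is precisely the idea your write-up is missing. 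With it, no approximate counting, no padding, and no Karp--Lipton-style argument are needed.
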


The proof is given in
Appendix~\ref{sect:no:go:sampling:post:selection}.

Since the condition on the error is more severe, {\bf PostSampP*}
$\subseteq$ {\bf PostSampP}.

\subsection{The no-go theorem}

\begin{figure}
  \begin{center}
    \includegraphics{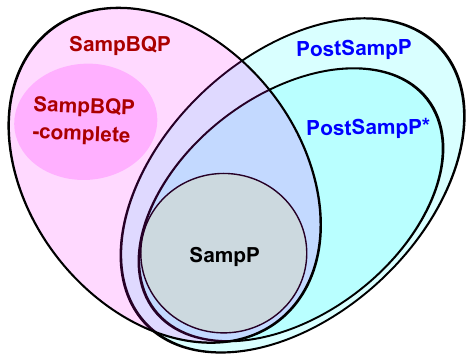}
  \end{center}
  \caption{Comparison of the classes of sampling problems {\bf SampP},
    {\bf PostSampP}, {\bf PostSampP}, and {\bf SampBQP}.}
  \label{fig:hierarchy}
\end{figure}

The relation between the classes is graphically shown in
Fig.~\ref{fig:hierarchy}. We see that {\bf PostSampP} and {\bf
  PostSampP} actually extend {\bf SampP}, but in a different direction
with respect to {\bf SampBQP}, thus missing {\bf SampBQP-complete}.

We can now summarize the results obtained so far with the following theorem.

\begin{theorem}
  \label{theorem:no:go}
  A theory $T$ is given.
  \begin{itemize}
  \item If SAMP$(T) \in$ {\bf SampP}, then
    $T$ is not valid. This proposition is based on the conjecture that
    {\bf BQP} is not contained in {\bf BPP}.
  \item If SAMP$(T) \in$ {\bf PostSampP}, then
    $T$ is not valid. This proposition is based on the conjecture that
    {\bf BQP} is not contained in {\bf PH}.
  \item If SAMP$(T) \in$ {\bf PostSampP*}, then
    $T$ is not valid. This proposition is based on the conjecture that
    the polynomial hyerarchy does not collapse.
  \end{itemize}
\end{theorem}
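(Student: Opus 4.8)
The plan is to obtain Theorem~\ref{theorem:no:go} as an immediate corollary of the machinery already assembled: Proposition~\ref{prop:valid:reduce:samp:t:qm} (and the proposition following Definition~\ref{def:samp:bqp}, which upgrades it to the statement that SAMP$(T)$ is hard for {\bf SampBQP} whenever $T$ is valid), together with Propositions~\ref{prop:sampp:not:circuit}, \ref{prop:postsampp:not:circuit} and \ref{prop:postsampp:star:not:circuit}. All three bullets follow the same contradiction template sketched at the start of Section~\ref{sect:no:go}: if SAMP$(T)$ lies in a class {\bf X} that (i) does not contain {\bf SampBQP} and (ii) is closed under polynomial-time reductions, then SAMP$(T)$ cannot be hard for {\bf SampBQP}, hence by the contrapositive of that proposition $T$ is not valid.

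Concretely, for the first bullet I would argue as follows. Assume SAMP$(T)\in${\bf SampP} and, for contradiction, that $T$ is valid. By Proposition~\ref{prop:valid:reduce:samp:t:qm} the valid mapping $R$ of Definition~\ref{def:theory:covers} is a polynomial-time reduction of SAMP$(T_{QM})$ to SAMP$(T)$; since SAMP$(T_{QM})$ is complete for {\bf SampBQP}, composing reductions shows that every problem in {\bf SampBQP} reduces in polynomial time to SAMP$(T)$. The second ingredient is that {\bf SampP} is closed under polynomial-time reductions: prepending a polynomial-time instance map to a probabilistic polynomial-time sampler again yields a probabilistic polynomial-time sampler, and the additive total-variation error is unaffected (with the usual rescaling of $1/\epsilon$). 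Hence {\bf SampBQP}$\subseteq${\bf SampP}, contradicting Proposition~\ref{prop:sampp:not:circuit} and its underlying conjecture {\bf BQP}$\nsubseteq${\bf BPP}. Therefore $T$ is not valid.

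The second and third bullets are literally the same argument with {\bf SampP} replaced by {\bf PostSampP} (invoking Proposition~\ref{prop:postsampp:not:circuit}, conjecture {\bf BQP}$\nsubseteq${\bf PH}) and by {\bf PostSampP*} (invoking Proposition~\ref{prop:postsampp:star:not:circuit}, conjecture: the polynomial hierarchy does not collapse). The one point that deserves care — and the only real content beyond bookkeeping — is verifying closure under polynomial-time reductions for the post-selection classes: one must check that prepending the instance map to the post-selecting sampler preserves both the non-vanishing acceptance probability (item 3 of Definition~\ref{def:post:samp:p:lenient}) and the error guarantee, which is the additive bound for {\bf PostSampP} and the more delicate multiplicative bound for {\bf PostSampP*}. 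I expect this to be the main (and still essentially routine) obstacle; once it is in place, each bullet collapses to a one-line contradiction.
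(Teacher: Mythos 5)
Your proposal is correct and follows essentially the same route as the paper: combine the validity-implies-{\bf SampBQP}-hardness proposition with Propositions~\ref{prop:sampp:not:circuit}, \ref{prop:postsampp:not:circuit}, and \ref{prop:postsampp:star:not:circuit}, plus closure of the three classes under polynomial-time reductions. The only difference is one of care: the paper dismisses the closure step as ``trivial to see,'' whereas you correctly flag that for {\bf PostSampP} and {\bf PostSampP*} one must check that prepending the instance map preserves the non-vanishing acceptance probability and the (additive or multiplicative) error bound --- a worthwhile addition, but not a different argument.
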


\begin{proof}
  The proof relies on Props.~\ref{prop:sampp:not:circuit},
  \ref{prop:postsampp:not:circuit}, and
  \ref{prop:postsampp:star:not:circuit}, stating that {\bf SampP},
      {\bf PostSampP}, and {\bf PostSampP*} do not contain {\bf
        SampBQP}, based on the mentioned conjectures.  Moreover, it is
      trivial to see that {\bf SampP}, {\bf PostSampP}, and {\bf
        PostSampP*} are closed under polynomial-time reductions.
\end{proof}

The meaning is that a hidden-variable theory $T$, whose associated
problem SAMP$(T)$ is in {\bf SampP}, {\bf PostSampP}, or {\bf
  PostSampP*}, is not valid, because it is too computationally
simple.

The class {\bf SampP} captures the idea of sampling from sequential
and local theories. Actually, such theories would describe systems
similar to classical, sequential computer calculations: quantum
mechanics is harder than this.  In a very vivid way, this concept has
been expressed in a paper with the title ``the universe is not a
computer''~\cite{wharton2015}. This result is however already expected
from Bell's theorem.

The classes {\bf PostSampP} and {\bf PostSampP*} represent theories
with retro-causality implemented by post-selection. Theories with
associated sampling problem in these classes are retro-causal, thus
they are not ruled out by Bell's theorem. They are instead ruled out
by the no-go theorem presented here: their associated sampling problem
is harder than for sequential theories, but not enough to reach the
computational complexity of quantum circuits.

It is important to notice that the no-go theorem has quite strong
hypothesis, which can violated by theories, as in every no-go
theorem. In particular, retro-causal theories are not ruled out as a
whole; rather, the no-go theorem says that retro-causality alone is
not enough to make a theory valid. Indeed, in
Sect.~\ref{sect:no:go:post:selection:schulman}, I actually show that
the generalized Schulman theory escapes the hypothesis of the no-go
theorem.


\section{Application of the no-go theorem to the generalized Schulman theory and some of its variants}

\label{sect:no:go:post:selection:schulman}

In this section I discuss the theory obtained by generalizing the
Schulman model, $T_{eS}$, already introduced in
Sect.~\ref{sect:schulman:model}. I provide two simplified theorems and I
analyze them by means of the no-go theorem. The comparison among
the three models highlights the importance of various details of the
extended Schulman theory.

\subsection{The variants of the Schulman model}

Before introducing the two variants, it is useful to notice that the
extended Schulman theory, introduced in Sect.~\ref{sect:schulman:model},
requires to make two limits. The first is that the width of the
Lorentz distribution vanishes: $\delta \varphi_L\to 0$. The second is
more implicit; let us assume that the constraint imposed by the
measurement setting has a tolerance $\delta \varphi_M$:
\begin{equation}
  \left| \varphi_j - \vartheta_j - n\pi \right| < \delta \varphi_M
\end{equation}
or
\begin{equation}
  \left| \varphi_j - \vartheta_j - \pi/2 - n\pi \right| < \delta \varphi_M
\end{equation}
The model prescribes that also $\delta \varphi_M$ vanishes, $\delta
\varphi_M\to 0$. The meaning and importance of these limits will be
discussed below.

In the first variant of the theory, $T_{nl}$, I remove these limits
(nl = no limits).  This means that I assume that $\delta \varphi_L$
and $\delta \varphi_M$ have a fixed, small value, e.g. 1/1000 (notice
that the unit is rad). Moreover, in $T_{nl}$, the probability follows
a truncated Lorentz distribution:
\begin{equation}
  \label{eq:truncated:lorentz:distribution}
  \mathcal{P}\left(\Delta \varphi\right) =
  \left\{
  \begin{array}{lc}
    \frac
        {\mathcal{M} \delta \varphi_L}
        {\delta \varphi_L^2 + \Delta \varphi^2} & \Delta \varphi \le
        \frac{1}{\delta \alpha} \\
        0 & \Delta \varphi >
        \frac{1}{\delta \alpha}  
  \end{array}
  \right.
\end{equation}
where $\delta \alpha$ is a constant defining the width of the
truncation and $\mathcal{M}$ is a normalization constant, approaching
1 for $\delta \alpha\to 0$. Clearly, this distribution approaches the
Lorentz distribution, Eq.~\ref{eq:lorentz}, for $\delta \alpha\to 0$.

In the second variant, $T_{c}$ (causal), I further simplify $T_{nl}$,
by removing the retro-causality. At the instant of the measurement,
the values of $\varphi_j$ are simply measured, without imposing any
constraint.

\subsection{Sequential and approximated retro-causal theories}

\begin{figure}
  \includegraphics{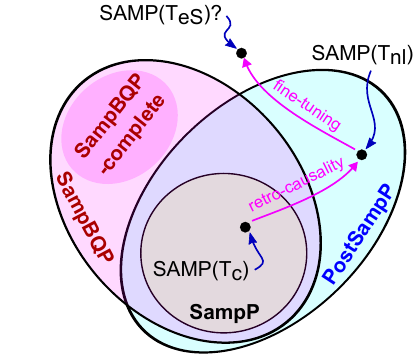}
  \caption{Scheme of the complexity classes and the position of the
    sampling problems of the three variants of the generalized
    Schulman theory: $T_{c}$, the sequential theory; $T_{nl}$, theory
    based on sequential calculation with post-selection; $T_{eS}$, the
    full generalized Schulman theory. }
  \label{fig:state:prob}
\end{figure}

It is easy to devise an algorithm that samples from $T_{c}$, trivially
following the described procedure: generate the initial values of the
variables, apply the prescribed random kicks and gates, and measure
the resulting $\varphi_j$. I remind that the characteristic feature of
$T_{c}$ is the absence of constraints imposed by the measurement,
thus, there is no retro-causality.

Intuitively, these operations take polynomial time in the size of the
representation of the model. Under reasonable hypothesis on the
computational complexities of the layers, it can be shown that the
operation can be actually done in polynomial time and with the
prescribed precision, thus SAMP$(T_{c})$ $\in$ {\bf SampP}
(See Fig.~\ref{fig:state:prob}).

This fact tells us that $T_{c}$ is not valid.  This result was already
expected from the Bell's theorem: having removed the retro-causality,
the model has no chances to be valid. But here we recover it through a
different way: $T_{c}$ is too simple to calculate to model such a
complex system as quantum mechanics.

Adding the retro-causality, we go from $T_{c}$ to $T_{nl}$ (we will
arrive to the full $T_{eS}$, but we make this step by step). The
retro-causality is added in the form of a post-selection: the
resulting samples must match with the measurement settings. This
suggests that SAMP$(T_{nl})$ can be more computationally complex than
SAMP$(T_c)$ and, likely, it falls in {\bf PostSampP}. Actually this
can be proved under the same reasonable assumptions (See
Fig.~\ref{fig:state:prob}).

This increase in computational complexity is still not enough to make
SAMP$(T_{nl})$ {\bf SampBQP}-hard, because {\bf SampBQP-complete} is
even outside {\bf PostSampP}, so $T_{nl}$ is still not a valid
theory. Even if $T_{nl}$ is harder than $T_{nl}$, it is still not hard
enough to be a valid theory.

\subsection{The full generalized Schulman theory}

Going from $T_{nl}$ to the full, general Schulman model $T_{eS}$
requires to take a limit: instead of having finite $\delta \varphi_L$,
$\delta \varphi_M$, and $\delta \alpha$, the limits for vanishing
values must be taken. Under these limits, the situation changes and
the reasoning that we applied to $T_{nl}$ cannot be carried out for
$T_{eS}$. The main difference is that it is impossible to sample from the
Lorentz distribution Eq.~\ref{eq:lorentz} with vanishing width $\delta
\varphi_L\to 0$, as required for $T_{eS}$, while it is possible to
sample from Eq.~\ref{eq:truncated:lorentz:distribution}, with finite
$\delta \varphi_L$ and $\delta \alpha$, as required by $T_{nl}$. A second
difference is that, for vanishing measurement tolerance $\delta
\varphi_M\to 0$, as requrested by $T_{eS}$, all the generated samples
are rejected.

These reasons prevent to prove that SAMP$(T_{es})$ is inside {\bf
  PostSampP}. I give a clue suggesting that SAMP$(T_{es})$ could be
harder than {\bf PostSampP}. The mathematical procedure that is used
in literature to solve the extended Schulman models is symbolic; it is
based on the Euler's solution of the Basel problem (see
Ref.~\cite{wharton2020}, appendix). As a first step, the tolerance of
the measurement setting is brought to 0, $\delta \varphi_M\to
0$. Then, it is noticed that the possible configurations of the kicks
are the ones in which the maximum number of the $\Delta \varphi$
vanish. The procedure then considers all the possible configurations
with the same, maximal, number of vanishing kicks $\Delta
\varphi$. The number of these configurations scales as the exponential
of the total number of kicks. Thus it seems that the symbolic
procedure resorts on something similar to counting, which is likely
harder than {\bf PostSampP}.

At the current state, it seems that it is not possible to apply the
no-go theorem to the full generalized Schulman theory $T_{eS}$ to
conclude that it is not valid. This fact does not prove that is is a
viable model: it is possible that there is an alternative proof that
SAMP$(T_{es})$ is inside {\bf PostSampP}, or that, even if
SAMP$(T_{es})$ is outside {\bf PostSampP}, it is not {\bf
  SampBQP}-hard, or that it is {\bf SampBQP}-hard but there is no
meaningful mapping between quantum circuits and the models. However,
it encourages further efforts in this direction.

The position of the SAMP$(T)$ problems with respect to the
computational complexity classes is schematically shown in
Fig.~\ref{fig:state:prob}.  SAMP$(T_{c})$ is in {\bf SampP},
SAMP$(T_{nl})$ is in {\bf PostSampP} and, likely, outside {\bf
  SampP}. We do not have evidences of inclusion SAMP$(T_{eS})$ in
these two classes, thus I represent it outside them. If so, it is
possible that SAMP$(T_{eS})$ is {\bf SampBQP}-hard, which means that
$T_{eS}$ could be valid; this makes it the only candidate, among the
variants considered here, as a viable model.

We can notice that $T_{nl}$ approximates $T_{eS}$ for vanishing
$\delta \varphi_L\to 0$, $\delta \varphi_M\to 0$, and $\delta
\alpha$. However, as discussed above, this approximation cannot be
used to bring the results of the post-selection algorithm close to the
desired result of $T_{eS}$; the error is, in any case, excessive with
respect to the requirements of {\bf PostSampP}. This fact has a strong
analogy with the requirement of fine tuning~\cite{wood2015}: even a
small deviation from the ideal parameters prevent the model to be
valid.


\section{Conclusion}

\label{sect:conclusion}

I provided a no-go theorem for hidden-variable theories, based on
computational complexity theory. The theory $T$ is first associated to
the mathematical problem of sampling from it, i.e. to generate samples
distributed with the probability defined by the model, SAMP$(T$.  Then
the computational complexity of the sampling problem is compared to
three class: {\bf SampP}, {\bf PostSampP}, and {\bf PostSampP*}. If it
is in one of them, i.e. it is less hard than, or at most as hard as
that class of problems, then it is not hard enough to model quantum
mechanics and the conclusion is that $T$ is not a valid theory.

It is important to notice that, like the Bell's theorem, this no-go
theorem is a necessary condition imposed on the possible
implementations of the model, without the need of associating a
specific implementation of the model to a quantum circuit.

Informally, the idea is to try to write a computer program that
samples from the hidden-variable model, within a given error, in a
time (number of steps) that is polynomial in the size of the
circuit. If we give to the algorithm the possibility of making
post-selection (rejecting a generated sample), we get the class {\bf
  PostSampP} or {\bf PostSampP*}. Without post-selection, we get the
smaller {\bf SampP}. Hidden-variable theories for which we can generate
samples in this way are too computationally simple to represent
quantum mechanics. See Fig.~\ref{fig:hierarchy} for a graphical
representation of the relations between the classes and the sampling
problems.

The formalization of this idea is the no-go theorem. Rigorously, it
relies on conjectures, which are however quite well accepted by the
scientific community. The use of {\bf SampP} relies on the widely
accepted conjecture that {\bf BQP} is not contained in {\bf BPP},
roughly, that quantum computers are more powerful than classical
computers. The use of {\bf PostSampP} is based on the stronger
conjecture that {\bf BQP} is not even contained in the large class
{\bf PH}. Although this fact is conjectured to be true, the consensus
is much less strong. For this reason, I provide an alternative class,
{\bf PostSampP}; for this class, the used conjecture is that the
polynomial hierarchy does not collapse. This is a widely accepted
conjecture, representing a generalization of the widely known
conjecture that {\bf P} does not equal {\bf NP}, and most of the
computational complexity theory relies on it.

Intuitively, we can associate {\bf SampP} to theories that represent
the sequential evolution of a state with local laws; in addition, {\bf
  PostSampP} and {\bf PostSampP*} include post-selection. Such
theories are deemed not valid by the no-go theorem. The first class
was already ruled out by the Bell's theorem and is further ruled out
by this independent no-go theorem. The second class is instead only
identified by this approach.

Like for every no-go theorem, the interest resides mostly in the the
hypothesis which can be violated (see the example of the Bell's
theorem). Among the possible violations, there are the following.
\begin{itemize}
\item The hidden-variable theories ruled out by the no-go theorem are
  based on sequential calculation with a possible post-selection.
  This is only one of the possible schemes of causality violation and
  retro-causality. All-at-once theories, not based on post-selection,
  could escape the hypothesis of the no-go theorem.
\item The post-selection required by the hypothesis of the no-go
  theorem requires that the validity of the generated sample is
  checked in polynomial time. A way circumvent the theorem is to
  develop a theory in which it is harder (e.g. exponential time) to
  decide to accept or reject a sample.
\item The classses {\bf SampP}, {\bf PostSampP}, and {\bf PostSampP}
  are based on a sampling within a limited error on the resulting
  distribution. Similar models can lead to widely different errors
  under limited resources. Actually, this situation reminds us the
  requirement for the fine tuning~\cite{wood2015}.
\item The verdict of non-validity of the no-go propositions can be
  circumvented by specifying a mapping $R$ between quantum circuits
  and models whose calculation requires more time than polynomial in
  the size of the input. This is a possibility, although it is hard to
  believe that such a hidden-variable model would be
  credible. Usually, the association between a quantum circuit and its
  corresponding hidden-variable circuit is thought to be direct, not
  requiring excessive calculations.
\end{itemize}

Moreover, I remind once again that the proof that {\bf SamopBQP} is
not included in {\bf SampP}, {\bf PostSampP}, nor {\bf PostSampP} is
based on conjectures. They are credible and widely held in the
computational complexity theory. This notwithstanding, it must be
considered that there is the possibility that these conjectures will
be proven false in the future.

A curious remark is that, according to the analysis above, here we aim
at models that are more difficult to calculate than others: only such
models can be valid models for quantum mechanics. The general trend is
instead the opposite, i.e. looking for fast algorithms to spare time.

As an example of the application of the no-go theorem, I formalize a
generalization of the Schulman model, $T_{eS}$, a retro-causal model
known to simulate some of the quanum phenomena. The generalization
encompasses various schemes that have been proposed in
literature. Then, I give two simplifications. In the first, $T_{nl}$,
some parameters which should vanish are approximated with finite
values; it violates the fine tuning. In the second, $T_c$, a furhter
change is made, consisting in removing the retro-causality.

The computational complexity analysis shows that $T_c$ belongs to {\bf
  SampP}, thus $T_c$ is not valid. This result is in agreement with
Bell's theorem, since $T_c$ is obtained by removing the
retro-causality. See Fig.~\ref{fig:state:prob} for a graphical
representation.

By restoring the retro-causality in $T_c$, we first obtain
$T_{nl}$. The added retro-causality brings the problem into a harder
class, {\bf PostSampP}: post-selection is analogous to the constraint
imposed by the measurement settings. However, the increase in
computational complexity of {\bf PostSampP} with respect to {\bf
  SampP} is not enough, or better, is not in the right direction to
make $T_{nl}$ valid. This is expected, since $T_{nl}$ is obtained from
$T_{eS}$ by removing the fine tuning, a necessary
feature~\cite{wood2015}.

It cannot be proved that the sampling problem associated to the full
generalized Schulman model, $T_{eS}$, is not in {\bf PostSampP}. This
prevents us to apply the no-go theorem. This hint encourages furhter
work on the model. See Fig.~\ref{fig:state:prob} for a graphical
representation.

We thus see that the no-go theorem allows us to obtain results that
are similar to known ones, but from a completely different
perspective.  The analysis of the variants of the generalized Schulman
theory highlights two of the features that are important:
retro-causality and fine tuning. Roughly, the former brings the
computational complexity from {\bf SampP} to {\bf PostSampP},
requiring post-selection; the latter violates the conditions on the
error by requiring a fine tuning, in the form of a limit for vanishing
parameters.

Similar features can be further exploited in future models. In
particular, a possible line of research could be to better understand
the computational complexity of SAMP$(T_{es})$. It can be
calculated as the limit for vanishing parameters (fine tuning) of
SAMP$(T_{nl})$, however, this does not tell us much about its
computational complexity class. Indeed, the mathematical evaluation of
the model does not rely on this limit, but rather on an exact
calculation based on the Euler's solution of the Basel problem (see
Ref.~\cite{wharton2020}, appendix). Developing an algorithm based on
this idea could be beneficial not only for the theoretical analysis of
the computational complexity, but also for the practical aim of
calculating the outcome of these hidden-variable models.

Is is worth further discussing the key feature of the algorithms used
in {\bf PostSampP}, the post-selection.  It actually increases the
computational complexity with respect to {\bf SampP}, but it increases
it in the wrong way, i.e. not in the direction of quantum mechanics
(compare {\bf PostSampP} and {\bf SampBQP} in
Fig.~\ref{fig:hierarchy}), so {\bf PostSampP} fails to include {\bf
  SampBQP}. Computational complexity theory tells us what is this
additional, but useless, complexity: post-selection alogorithms are
able to solve puzzles, i.e. problems in which the solution can be
easily checked, as opposed to games, like chess, in which there is no
simple way to say if a move is good or not, not even when we see
it. Technically, for example, {\bf PostBPP} and {\bf BPP}
$^{\mathrm{PostSampP}}$ include {\bf NP} (roughly, the
puzzles). Solving puzzles does not get a speed up from quantum
computers, thus {\bf PostSampP} is not included in {\bf SampBQP}, nor
solving puzzles suffices to simulate quantum computers, e.g. {\bf
  SampBQP} is not included in {\bf PostSampP}. These two abilities,
solving puzzles and solving quantum systems, seem to be
unrelated. Thus, the retro-causality alone does not suffices to
simulate quantum circuits.

It is worth noticing that this additional complexity also encompasses
the ability of solving all-at-once models in which a solution can be
checked in polynomial time. We know that such models are plagued by
the problem of propagation of signals back in time, which, in turns,
results in the presence of paradoxes. More technically, as we already
saw, {\bf BPP}$^{\mathrm{PostSampP}}$ include {\bf NP}, and {\bf NP}
is a class which can deal with things like $x=NOT x$: paradoxes. As we
already saw, the violation of the fine tuning, operated in $T_{nl}$,
is expected to induce the propagation of signals back in time; indeed,
we saw that SAMP$(T_{nl})$ belongs to {\bf PostSampP}, and thus
has this ``useless'' additional ability, to solve puzzles and deal
with paradoxes.

According to this discussion, it is thus important to take into
considerations and try to develop hidden-variable models in which the
verification of a solution (or the evaluation of its probability)
cannot be done in polynomial time. This would go beyond the abilities
of post-selection algorithms, thus violating the hypothesis of the
no-go theorem.

\appendix


\section{Classes of decision problems related to post-selection}

\label{sect:decision}

This section contains a discussion of known facts in computational
complexity theory and the derivation of relations between complexity
classes related to post-selection. They are needed for the proofs
given in Sect.~\ref{sect:no:go:sampling:post:selection}.

\subsection{Definition of the bounded-error classes}

There are two classes representing the decision problems that can be
solved with bounded error, in polynomial time, by probabilistic
algorithms: {\bf BPP} and {\bf BQP}. They are based on classical and
quantum algorithms, respectively.

I report the definition of {\bf BPP}, even if it does not appear in
the discussion below, for the sake of comparison.

\begin{definition}[Class {\bf BPP}]
  \label{def:bpp}
``The language $L$ is in {\bf BPP}'' means that there esists a
  probabilistic (classical) algorithm $M$, such that:
\begin{enumerate}
\item $M$ takes an input $x\in \{0,1\}^*$ and outputs 1 bit, $y$;
\item For all $x\in L$, the probability of getting output $y=1$ is
  greater than or equal to 2/3
\item For all $x\notin L$, the probability of getting output $y=0$ is
  greater than or equal to 2/3
\end{enumerate}
\end{definition}

For the class {\bf BQP}, I use the definition based on quantum circuits.

\begin{definition}[Class {\bf BQP}]
  \label{def:bqp}
``The language $L$ is in {\bf BQP}'' means that there 
 exists a polynomial-time uniform family of quantum circuits 
$\left\{Q_n : n\in \mathbb{N} \right\}$, such that:
\begin{enumerate}
\item For all $n\in \mathbb{N}$, $Q_n$ takes $n$ qubits as input and
  outputs 1 bit, $y$;
\item For all $x\in L$, the probability of getting output $y=1$ is
  greater than or equal to 2/3
\item For all $x\notin L$, the probability of getting output $y=0$ is
  greater than or equal to 2/3
\end{enumerate}
\end{definition}

I remind that a ``polynomial-time uniform family of quantum circuits''
is a family of descriptions of quantum circuits that are generated in
polynomial time in $n$ by a Turing machine. This definition also
implies that the size of the description of the circut $Q_n$ is also
polynomially bounded.

\subsection{Definition of the post-selection versions}

The discussion below will involve the variants of these bounded-error
classes, in which post-selection is added. The class {\bf PostBPP} is
defined in Ref.~\cite{bravyi2007}.

\begin{definition}[Class {\bf PostBPP}]
  \label{def:postbpp}
``The language $L$ is in {\bf PostBPP}'' means that there is a
  probabilistic algorithm $M$ such that:
\begin{enumerate}
\item $M$ runs for polynomial time on the size of its input;
\item $M$ returns two bits, $y_{\mathrm{sample}}$ and $y_{\mathrm{valid}}$;
\item The probability of getting $y_{\mathrm{valid}}=1$ does not vanish;
\item For all $x\in L$, conditioned to $y_{\mathrm{valid}}=1$, the probability
  of getting output $y_{\mathrm{sample}}=1$ is greater than or equal to 2/3
\item For all $x\notin L$, conditioned to $y_{\mathrm{valid}}=1$, the
  probability of getting output $y_{\mathrm{sample}}=0$ is greater than or equal
  to 2/3
\end{enumerate}
\end{definition}

The class {\bf PostBQP} is defined in Ref.~\cite{aaronson2005}.

\begin{definition}[Class {\bf PostBQP}]
  \label{def:postbqp}
``The language $L$ is in {\bf BQP}'' means that there 
 exists a polynomial-time uniform family of quantum circuits 
$\left\{Q_n : n\in \mathbb{N} \right\}$, such that:
\begin{enumerate}
\item For all $n\in \mathbb{N}$, $Q_n$ takes $n$ qubits as input and
  outputs 2 bits, $y_{\mathrm{sample}}$ and $y_{\mathrm{valid}}$;
\item The probability of getting $y_{\mathrm{valid}}=1$ does not vanish;
\item For all $x\in L$, conditioned to $y_{\mathrm{valid}}=1$, the probability
  of getting output $y_{\mathrm{sample}}=1$ is greater than or equal to 2/3
\item For all $x\notin L$, conditioned to $y_{\mathrm{valid}}=1$, the
  probability of getting output $y_{\mathrm{sample}}=0$ is greater than or equal
  to 2/3
\end{enumerate}
\end{definition}

We can interpret $y_{\mathrm{valid}}$ as a flag saying whether $y_{\mathrm{sample}}$
is valid or has to be rejected. In case of rejection, it is possible
to run again the algorithm, until a valid $y_{\mathrm{sample}}$ is returned. By
definition, a valid sample will be obtained, soon or later, even if it is not
guaranteed to happen within any known time.

To better clarify, the conditioned proabilities refer to the case in
which a valid value $y_{\mathrm{sample}}$ is returned, i.e. neglecting the
$y_{\mathrm{valid}}=0$ outcomes. For example, given the algorithm $M$
and the input $x$, if the overall prpbability of getting $y_{\mathrm{sample}}=0$,
$y_{\mathrm{valid}}=1$ is 2\%, of getting $y_{\mathrm{sample}}=1$,
$y_{\mathrm{valid}}=1$ is 6\%, and thus the overall probability of
getting $y_{\mathrm{valid}}=0$ is 92\%, then the conditioned
probability of getting $y_{\mathrm{sample}}=1$ is 3/4 = 6\%/(2\%+6\%).

It is clear that the term Post- here has the same meaning explained
for the sampling classes, i.e. it refers to a post-selection
algorithm. The non-validity expressed by $y_{\mathrm{valid}}=0$
informally corresponds to the 'FAILED' output of the post-selection
sampling algorityhms.

\subsection{The error is arbitrary}

In all the four defined bounded-error classes, the parameter 2/3
represents a success rate of the probabilistic algorithm. In other
terms, $\epsilon=$ 1/3 is the maximum admissible error (hence the name
``bounded-error'' which appears in the class acronyms {\bf BPP} and
{\bf BQP} and in their post-selection versions).

It is worth noting that the error $\epsilon$ is actually completely
arbitrary: defining a class with a different $\epsilon<1/2$ leads to
the very same class. The idea is that, for $\epsilon<1/2$, any
precision can be obtained by repeating the algorithm a suitable number
$m$ of times, independent of the input $x$, and taking the majority of
the results.

The operation is trivially done for {\bf BPP}. In the case of {\bf
  BQP}, this operation can be simply done by considering a quantum
circuit formed by repeating $m$ times the original one and finally
taking the majority of the outputs, operation that can be done in
polynomial time by quantum circuits. This factor $m$ is fixed by the
desired error, thus it does not change the condition that the family
of quantum circuits is polynomially generated.

In the case of {\bf PostBPP}, something similar can be done.  The
following explanation refers to the explicit post-selection.  Let us
assume that we need $m$ samples to decide the result within the
required error $\epsilon$. Then, the algorithm generates $m$ samples
$y_j$, then operates a post-selection decision on all of them. If any
of the samples is rejected, then $y_{\mathrm{valid}}=0$ is returned,
together with any $y_{\mathrm{sample}}$ (0, 1, or random, it does not
matter). Else, the outputs from each of the $y_{\mathrm{sample},j}$ is
calculated and the majority of them defines the output
$y_{\mathrm{sample}}$, together with $y_{\mathrm{valid}}=0$.

The following explanation refers to the algorithm $M$ described in
Def.~\ref{def:postbpp}.  The algorithm generates $m$ samples $y_{\mathrm{sample},j}$
and $y_{\mathrm{valid},j}$.  If $y_{\mathrm{valid},j}=0$ for any of
the samples $j$, then $y_{\mathrm{valid}}=0$ is returned, together
with any $y_{\mathrm{sample}}$ (0, 1, or random, it does not matter). Else, the
majority of the $y_{0,j}$ is used to calculate the output $y_{\mathrm{sample}}$,
together with $y_{\mathrm{valid}}=1$. An analogous procedure applies
to {\bf PostBQP}.

\subsection{Relation between {\bf PostBQP} and {\bf PostBPP}}

The following sections rely on two relations between classes. The
first is expressed by this proposition.

\begin{proposition}
  \label{prop:postbpp:postbqp}
        {\bf PostBPP} does not include {\bf PostBQP}.  This
        proposition is based on the conjecture that the polynomial
        hierarchy does not collapse.
\end{proposition}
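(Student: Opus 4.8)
The plan is to prove the contrapositive via a well-known collapse argument: if \textbf{PostBPP} $\supseteq$ \textbf{PostBQP}, then the polynomial hierarchy collapses. The key ingredient is Aaronson's theorem that \textbf{PostBQP} $=$ \textbf{PP}, together with Toda's theorem that \textbf{PH} $\subseteq$ \textbf{P}$^{\textbf{PP}}$, and the standard fact that \textbf{PostBPP} (also known as \textbf{BPP}$_{\mathrm{path}}$) sits low in the polynomial hierarchy — specifically \textbf{PostBPP} $\subseteq$ \textbf{BPP}$^{\mathrm{NP}} \subseteq \Sigma_3^p$, and more to the point \textbf{PostBPP} $\subseteq$ \textbf{PH}. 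So first I would recall these three facts as lemmas (citing the literature: Aaronson for \textbf{PostBQP} $=$ \textbf{PP}, Han--Hemaspaandra--Thierauf for \textbf{BPP}$_{\mathrm{path}}$ $\subseteq \Sigma_3^p$, and Toda).

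Then the argument is short. Suppose for contradiction that \textbf{PostBQP} $\subseteq$ \textbf{PostBPP}. By Aaronson, $\textbf{PP} = \textbf{PostBQP} \subseteq \textbf{PostBPP} \subseteq \Sigma_3^p \subseteq \textbf{PH}$. But then \textbf{P}$^{\textbf{PP}} \subseteq$ \textbf{P}$^{\textbf{PH}} =$ \textbf{PH} (since \textbf{PH} is closed under polynomial-time Turing reductions to levels within it — one should be slightly careful here and instead argue $\textbf{PP} \subseteq \Sigma_k^p$ for some fixed $k$, hence $\textbf{P}^{\textbf{PP}} \subseteq \textbf{P}^{\Sigma_k^p} = \Delta_{k+1}^p \subseteq \textbf{PH}$). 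Combining with Toda's theorem, $\textbf{PH} \subseteq \textbf{P}^{\textbf{PP}} \subseteq \textbf{PH}$, so \textbf{PH} $= \Delta_{k+1}^p$, and in particular the polynomial hierarchy collapses to a finite level. This contradicts the conjectured non-collapse, which proves the proposition.

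The main obstacle, and the place where I would spend the most care, is the low-level details of why \textbf{PostBPP} lies inside a fixed level of \textbf{PH} — this is the one non-trivial containment, and the cleanest route is to cite the characterization of \textbf{PostBPP} as \textbf{BPP}$_{\mathrm{path}}$ and the known bound \textbf{BPP}$_{\mathrm{path}} \subseteq \textbf{BPP}^{\textbf{NP}}_{\parallel} \subseteq \Sigma_3^p$. One subtlety worth flagging: the definition of \textbf{PostBPP} in Def.~\ref{def:postbpp} above uses a post-selection probability that merely ``does not vanish'' rather than being inverse-exponentially lower-bounded; I would note that these give the same class because a probabilistic polynomial-time machine either accepts a branch with probability $0$ or with probability at least $2^{-q(|x|)}$ for a polynomial $q$ bounding its running time, so the ``does not vanish'' formulation is equivalent to the usual one and the cited containments apply verbatim. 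The rest is bookkeeping with the inclusion chain, and I would present it exactly in the order above: state the three imported facts, assume the inclusion, derive $\textbf{PP} \subseteq \textbf{PH}$, apply Toda, conclude collapse.
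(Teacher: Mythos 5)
Your proposal is correct and takes essentially the same route as the paper: both rest on \textbf{PostBQP} $=$ \textbf{PP} (Aaronson), \textbf{PostBPP} $=$ \textbf{BPP}$_{\mathrm{path}} \subseteq$ \textbf{BPP}$^{\textbf{NP}} \subseteq \Sigma_3^p \subseteq$ \textbf{PH} (Han--Hemaspaandra--Thierauf), and Toda's theorem; the paper simply phrases the Toda step directly as ``\textbf{PP} $\not\subseteq$ \textbf{PH} unless \textbf{PH} collapses'' and exhibits a language $L_0 \in$ \textbf{PostBQP} $\setminus$ \textbf{PH}, whereas you unfold the same collapse argument in contrapositive form. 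Your extra remarks --- spelling out $\textbf{P}^{\textbf{PP}} \subseteq \Delta_{k+1}^p$ and noting that the ``does not vanish'' post-selection condition is equivalent to the usual inverse-exponential lower bound --- are correct refinements of details the paper leaves implicit.
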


This means that there is at least one language $L$ that belongs to
{\bf PostBQP} but not to {\bf PostBPP}.

In computational complexity theory, most of the propositions and
theorems that are proven state the inclusion of classes. Proving that
a class is not included in another one is much less common, only a few
of such statements are known. However, such proofs can be found
assuming the validity of some conjectures, which are widely believed
to be true. Also in the case of Prop.~\ref{prop:postbpp:postbqp}, I
will give a proof based on a conjecture. The required conjecture is
that the polynomial hierarchy does not collapse. It is a
generalization of the conjecture that {\bf NP} does not equal {\bf
  P}. This conjecture is widely assumed to hold and many results in
computational complexity theory rely on it.

\begin{proof}
  This proof is based on the conjecture that the polynomial hierarchy
  does not collapse.

  Ref.~\cite{aaronson2005} shows that {\bf PostBQP} equals {\bf PP}.
  Under the conjecture that the polynomial hierarchy does not
  collapse, {\bf PP} is not included in {\bf PH}
  (Ref.~\cite{toda1991}), thus {\bf PostBQP} is not included in {\bf
    PH} as well. Thus, there is a language $L_0$ such that $L_0\in$
  {\bf PostBQP} and $L_0\notin$ {\bf PH}.

  Ref.~\cite{bravyi2007} shows that {\bf PostBPP} equals {\bf
    BPP$_{path}$}, defined in Ref.~\cite{han1997}. This class is
  included in {\bf BPP$^{\mathrm{NP}}$} (Ref.~\cite{han1997}), which,
  in turn, is included in {\bf $\Sigma^p_3$} (see Ref.~\cite{han1997},
  Fig. 1; the inclusion is considered trivial) and is thus part of
  {\bf PH} (see the definition of {\bf PH}).

  Summarizing, {\bf PostBPP} $\subseteq$ {\bf PH}. Since $L_0\notin$ {\bf
    PH}, then $L_0\notin$ {\bf PostBPP}.
\end{proof}

\subsection{Relation between {\bf BQP} and {\bf PostBPP}}

This proposition represents a relation that is more strict than
Prop.~\ref{prop:postbpp:postbqp}.

\begin{proposition}
  \label{prop:postbpp:bqp}
        {\bf PostBPP} does not include {\bf BQP}.  This proposition is
        based on the conjecture that {\bf BQP} is not included in {\bf
          PH}.
\end{proposition}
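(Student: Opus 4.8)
The plan is to reuse the inclusion {\bf PostBPP} $\subseteq$ {\bf PH} that was already established inside the proof of Prop.~\ref{prop:postbpp:postbqp}, and to combine it with the conjecture that {\bf BQP} is not included in {\bf PH} via a one-line contrapositive argument. No genuinely new machinery is required; the statement is strictly an ``upgrade'' of Prop.~\ref{prop:postbpp:postbqp} under a stronger hypothesis.

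Concretely, I would first record the chain {\bf PostBPP} $=$ {\bf BPP$_{path}$} $\subseteq$ {\bf BPP$^{\mathrm{NP}}$} $\subseteq$ {\bf $\Sigma^p_3$} $\subseteq$ {\bf PH}, citing Refs.~\cite{bravyi2007,han1997} exactly as in the previous proof. This is the only substantive ingredient, and it is already in hand.

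Then I would argue by contradiction. Suppose {\bf BQP} $\subseteq$ {\bf PostBPP}. Composing with the inclusion above gives {\bf BQP} $\subseteq$ {\bf PH}, which contradicts the conjecture. Hence there exists a language $L$ with $L \in$ {\bf BQP} and $L \notin$ {\bf PostBPP}, i.e. {\bf PostBPP} does not include {\bf BQP}, as claimed.

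Finally I would add a remark explaining why this is ``more strict'' than Prop.~\ref{prop:postbpp:postbqp}: since trivially {\bf BQP} $\subseteq$ {\bf PostBQP} (a post-selecting quantum circuit may simply never reject, always outputting $y_{\mathrm{valid}}=1$), the non-inclusion just obtained immediately implies {\bf PostBQP} $\nsubseteq$ {\bf PostBPP}; the cost of this sharper conclusion is the stronger hypothesis that {\bf BQP} $\nsubseteq$ {\bf PH}, in place of the weaker assumption (that the polynomial hierarchy does not collapse) used in Prop.~\ref{prop:postbpp:postbqp}. I do not anticipate any real obstacle here: the proof is essentially bookkeeping around the inclusion {\bf PostBPP} $\subseteq$ {\bf PH}, and the only point to be careful about is that the two propositions rest on different, incomparable-strength conjectures, which should be stated explicitly.
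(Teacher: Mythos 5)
Your proposal is correct and follows essentially the same route as the paper's own proof: both establish \textbf{PostBPP} $=$ \textbf{BPP$_{path}$} $\subseteq$ \textbf{BPP$^{\mathrm{NP}}$} $\subseteq$ $\Sigma^p_3$ $\subseteq$ \textbf{PH} via Refs.~\cite{bravyi2007,han1997} and then invoke the conjecture that \textbf{BQP} $\nsubseteq$ \textbf{PH} to conclude. Your added remark on the comparison of the two conjectures' strengths matches the surrounding discussion in the paper and introduces nothing that needs separate justification.
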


This means that there is at least one language $L$ that belongs to
{\bf BQP} but not to {\bf PostBPP}.

In general, proving that a class is not included in another one is
difficult. In particular, proving that {\bf BQP} is not included in
other classical classes is still difficult, even based on conjectures
on classical computation classes, like in the proof of
Prop.~\ref{prop:postbpp:postbqp}. Such statements are only known
relative to oracles, see, e.g., Ref.~\cite{raz2018}.

It is widely believed that {\bf BQP} contains problems outside {\bf
  NP}.  The recent results of Ref.~\cite{raz2018} try to compare {\bf
  BQP} with {\bf PH}, which includes {\bf NP} and is conjectured to be
separated from it. The result is however only based on the oracle
version of the classes. Althouhgh the theorem informally suggests that
{\bf BQP} has abilities that are outside {\bf PH}, it still does not
represent a proof.

I will give a proof of Prop.~\ref{prop:postbpp:bqp} based on the
conjecture that {\bf BQP} is not included in {\bf PH}. Unfortunately,
this conjecture is much less supported than the one used for proving
Prop.~\ref{prop:postbpp:postbqp}, i.e. that the polynomial hierarchy
does not collapse. It is worth noting that the proofs of the following
sections will not necessarily rely on this (less reliable) result; it
is only given as a possible alternative reasoning.

\begin{proof}
  This proof is based on the conjecture that {\bf BQP} is not included
  in {\bf PH}.

  Ref.~\cite{bravyi2007} shows that {\bf PostBPP} equals
  {\bf BPP$_{path}$}, defined in Ref.~\cite{han1997}. This class is
  included in {\bf BPP$^{\mathrm{NP}}$} (Ref.~\cite{han1997}), which,
  in turn, is included in {\bf $\Sigma^p_3$} (see Ref.~\cite{han1997},
  Fig. 1; the inclusion is considered trivial) and is thus part of
  {\bf PH} (see the definition of {\bf PH}).

  We thus find that {\bf PostBPP} is included in {\bf PH}. Under the
  conjecture that {\bf BQP} is not included in {\bf PH}, we prove the
  thesis.
\end{proof}


\section{Proof of propositions on sampling complexity classes}

\label{sect:no:go:sampling:post:selection}

These proofs are based on results of Appendix~\ref{sect:decision}.

\subsection{Proof of Prop.~\ref{prop:sampp:not:circuit}}

\begin{proof}
  According to the conjecture, there is a language $L$ that belongs to
  {\bf BQP} but not to {\bf BPP}. According to the definition of {\bf
    BQP}, that language $L$ can be decided using a polynomial-time
  uniform family of quantum circuits $\left\{Q_n : n\in \mathbb{N}
  \right\}$ with one bit of output. Let us call $F$ the problem of
  sampling from this family of quantum circuits. I show that the
  problem $F$ is not in {\bf SampP}, i.e. the quantum circuits of this
  family cannot be sampled by the algorithms used in {\bf SampP}.

  Due to the origin of $Q_n$ from the definition of the language $L$
  in {\bf BQP}, the probability distribution $\mathcal{D}_{Q_{|x|},x}(y)$ of
  $F$ is, for $x\in L$:
  \begin{equation}
    \mathcal{D}_{Q_{|x|},x}(y=0) < 1/3 \; , \; \mathcal{D}_{Q_{|x|},x}(y=1) > 2/3
  \end{equation}
  and, for $x\notin L$:
  \begin{equation}
    \mathcal{D}_{Q_{|x|},x}(y=0) > 2/3 \; , \; \mathcal{D}_{Q_{|x|},x}(y=1) < 1/3
  \end{equation}
  
  By contradiction, I assume that $F$ belongs to {\bf
    SampP}. Then, there is an algorithm $B$ that generates samples $y$
  as a function of $x$ with a probability $\mathcal{C}_{Q_{|x|},x}(y)$:
  \begin{equation}
    \sum_{y\in \{0,1\}^{p\left(|x|\right)}}
    \left| \mathcal{C}_{Q_{|x|},x}(y) - \mathcal{D}_{Q_{|x|},x}(y) \right| \le \epsilon
  \end{equation}
  For each one-bit output $y$:
  \begin{equation}
    \mathcal{D}_{Q_{|x|},x}(y) - \epsilon
    \le \mathcal{C}_{Q_{|x|},x}(y)
    \le \mathcal{D}_{Q_{|x|},x}(y) + \epsilon
  \end{equation}
  Using the suitable inequality, chosen between these two, we get, for $x\in L$:
  \begin{equation}
    \mathcal{C}_{Q_{|x|},x}(y=0)  \le \mathcal{D}_{Q_{|x|},x}(y=0) + \epsilon < 1/3 + \epsilon
    \; , \;
    \mathcal{C}_{Q_{|x|},x}(y=1)  \ge \mathcal{D}_{Q_{|x|},x}(y=1) - \epsilon > 2/3 -\epsilon
  \end{equation}
  and, for  $x\notin L$:
  \begin{equation}
    \mathcal{C}_{Q_{|x|},x}(y=0)  \ge \mathcal{D}_{Q_{|x|},x}(y=0) - \epsilon > 2/3 -\epsilon
    \; , \;
    \mathcal{C}_{Q_{|x|},x}(y=1)  \le \mathcal{D}_{Q_{|x|},x}(y=1) + \epsilon < 1/3 + \epsilon
  \end{equation}
  This algorithm $B$ can thus be used in the definition of {\bf BPP},
  leading to an error $1/3+\epsilon$. If $\epsilon$ is small enough,
  $1/3+\epsilon<1/2$, the class remains the same. Then we conclude
  that $L$ belongs to {\bf BPP}, which is a contradiction.
\end{proof}

\subsection{Proof of Prop.~\ref{prop:postsampp:not:circuit}}

\begin{proof}
  This proof is based on Prop.~\ref{prop:postbpp:bqp}, hence it
  relies on the conjecture that {\bf BQP} is not included
  in {\bf PH}.
  
  Based on Prop.~\ref{prop:postbpp:bqp}, we know that there is a
  language $L$ that belongs to {\bf BQP} but not to {\bf
    PostBPP}. According to the definition of {\bf BQP}, that language
  $L$ can be decided using a polynomial-time uniform family of quantum
  circuits $\left\{Q_n : n\in \mathbb{N} \right\}$ with one bit of
  output. I call $F$ the problem of sampling from this family of
  quantum circuits; I show that the quantum circuits of this family
  cannot be sampled by the algorithms used in {\bf PostSampP}.

  Due to the origin of $Q_n$ from the definition of the language $L$
  in {\bf BQP}, the probability distribution $\mathcal{D}_{Q_{|x|},x}(y)$ of
  $F$ is, for $x\in L$:
  \begin{equation}
    \mathcal{D}_{Q_{|x|},x}(y=0) < 1/3 \; , \; \mathcal{D}_{Q_{|x|},x}(y=1) > 2/3
  \end{equation}
  and, for $x\notin L$:
  \begin{equation}
    \mathcal{D}_{Q_{|x|},x}(y=0) > 2/3 \; , \; \mathcal{D}_{Q_{|x|},x}(y=1) < 1/3
  \end{equation}
  
  By contradiction, I assume that $F$ belongs
  to {\bf PostSampP}. Then, there is an algorithm $B$, with the
  possibility of returning 'FAILED', which generates samples $y$ as a
  function of $x$ with a probability $\mathcal{C}_{Q_{|x|},x}(y)$ (conditioned
  to have a valid output, i.e. not 'FAILED'):
  \begin{equation}
    \sum_{y\in \{0,1\}^{p\left(|x|\right)}}
    \left| \mathcal{C}_{Q_{|x|},x}(y) - \mathcal{D}_{Q_{|x|},x}(y) \right| \le \epsilon
  \end{equation}
  For each one-bit output $y$:
  \begin{equation}
    \mathcal{D}_{Q_{|x|},x}(y) - \epsilon
    \le \mathcal{C}_{Q_{|x|},x}(y)
    \le \mathcal{D}_{Q_{|x|},x}(y) + \epsilon
  \end{equation}
  Using the suitable inequality, chosen between these two, we get, for $x\in L$:
  \begin{equation}
    \mathcal{C}_{Q_{|x|},x}(y=0)  \le \mathcal{D}_{Q_{|x|},x}(y=0) + \epsilon < 1/3 + \epsilon
    \; , \;
    \mathcal{C}_{Q_{|x|},x}(y=1)  \ge \mathcal{D}_{Q_{|x|},x}(y=1) - \epsilon > 2/3 -\epsilon
  \end{equation}
  and, for  $x\notin L$:
  \begin{equation}
    \mathcal{C}_{Q_{|x|},x}(y=0)  \ge \mathcal{D}_{Q_{|x|},x}(y=0) - \epsilon > 2/3 -\epsilon
    \; , \;
    \mathcal{C}_{Q_{|x|},x}(y=1)  \le \mathcal{D}_{Q_{|x|},x}(y=1) + \epsilon < 1/3 + \epsilon
  \end{equation}
  This algorithm $B$ can thus be used in the definition of {\bf PostBPP},
  leading to an error $1/3+\epsilon$. If $\epsilon$ is small enough,
  $1/3+\epsilon<1/2$, the class remains the same. Then we conclude
  that $L$ belongs to {\bf PostBPP}, which is a contradiction.
\end{proof}

\subsection{Proof of Prop.~\ref{prop:postsampp:star:not:circuit}}

In order to proceed with the proof, I first give a definition.
\begin{definition}[Conditioned probability]
  Given a $\mathcal{D}_x(y)$, with $|y|=p\left(|x|\right)$, we define
  the conditioned probability $\mathcal{\bar{D}}_x(k, b, y')$
  representing the probability of getting $y_k=b$ for a given index
  $k$ and bit value $b$, conditioned to all the other values $j\ne k$
  to be $y_j=y'_j$, where $y'\in \{0,1\}^{p(|x|)}$ is a given string:
  \begin{equation}
    \mathcal{\bar{D}}_x(k, b, y') =
    \frac{
      \mathcal{D}_x(y_k=b, y_j=y'_j)
    }{
      \mathcal{D}_x(y_k=0, y_j=y'_j) + \mathcal{D}_x(y_k=1, y_j=y'_j)
    }
  \end{equation}
\end{definition}

Moreover, we need the following lemma.

\begin{lemma}
  \label{lemma:multiplicative:error:conditioned:probability}
  Given a probability distributions $\mathcal{D}_x(y)$ and an $\epsilon'>0$,
  it is possible to find an $\epsilon>0$ such that, for every
  probability distribution $\mathcal{C}_x(y)$ that satisfies
  \begin{equation}
    \label{eq:lemma:epsilon}
    \left| \frac{\mathcal{C}_x(y)}{\mathcal{D}_x(y)} -1 \right| \le \epsilon
  \end{equation}
  for every $y$, it also satisfies
  \begin{equation}
    \label{eq:lemma:epsilon:p}
    \left| \mathcal{\bar{C}}_x(k, b, y') - \mathcal{\bar{D}}_x(k, b,
    y') \right| \le \epsilon'
  \end{equation}
  for every $k$, $0\le k<p\left(|x|\right)$, $b=0$ or 1, and $y'\in
  \{0,1\}^{p(|x|)}$.
\end{lemma}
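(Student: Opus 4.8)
The plan is to use that a two-sided multiplicative bound on the joint distribution $\mathcal{C}_x$ relative to $\mathcal{D}_x$ is inherited by every marginal obtained by summing over some of the coordinates, and then to track how such a bound behaves under the ratio that defines a conditioned probability. Concretely, I would fix $x$, an index $k$, a bit $b$, and a string $y'$, and examine separately the numerator $\mathcal{C}_x(y_k=b,y_j=y'_j)$ of $\bar{\mathcal{C}}_x(k,b,y')$, which is a single value of the joint distribution, and its denominator $\mathcal{C}_x(y_k=0,y_j=y'_j)+\mathcal{C}_x(y_k=1,y_j=y'_j)$, a sum of two such values.

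From the hypothesis Eq.~\ref{eq:lemma:epsilon}, every value of $\mathcal{C}_x$ lies in the interval $[(1-\epsilon)\mathcal{D}_x(\cdot),(1+\epsilon)\mathcal{D}_x(\cdot)]$; since a bound of this form is preserved when nonnegative terms are added, the numerator and the denominator of $\bar{\mathcal{C}}_x(k,b,y')$ are each squeezed between $(1-\epsilon)$ and $(1+\epsilon)$ times the corresponding numerator and denominator of $\bar{\mathcal{D}}_x(k,b,y')$. Dividing, and assuming $\epsilon<1$, I obtain $\frac{1-\epsilon}{1+\epsilon}\,\bar{\mathcal{D}}_x(k,b,y') \le \bar{\mathcal{C}}_x(k,b,y') \le \frac{1+\epsilon}{1-\epsilon}\,\bar{\mathcal{D}}_x(k,b,y')$. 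Subtracting $\bar{\mathcal{D}}_x(k,b,y')$ and using $0\le \bar{\mathcal{D}}_x(k,b,y')\le 1$ gives $\left|\bar{\mathcal{C}}_x(k,b,y')-\bar{\mathcal{D}}_x(k,b,y')\right|\le \frac{2\epsilon}{1-\epsilon}$, a bound that depends neither on $k$, $b$, $y'$ nor even on $x$. It then suffices to choose $\epsilon$ small enough that $\frac{2\epsilon}{1-\epsilon}\le\epsilon'$, for example $\epsilon=\min\{1/2,\epsilon'/4\}$, to obtain Eq.~\ref{eq:lemma:epsilon:p}.

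The one point that needs a little care is the degenerate case where $\mathcal{D}_x(y_k=0,y_j=y'_j)+\mathcal{D}_x(y_k=1,y_j=y'_j)=0$, so that $\bar{\mathcal{D}}_x(k,b,y')$ is the indeterminate $0/0$. There the hypothesis forces $\mathcal{C}_x$ to vanish on those same two strings, since a nonzero value over a zero value would make the ratio in Eq.~\ref{eq:lemma:epsilon} infinite; hence $\bar{\mathcal{C}}_x(k,b,y')$ is the same $0/0$ expression, and adopting the convention that the two are assigned a common value (or simply excluding this case, which plays no role in the applications of the lemma) makes Eq.~\ref{eq:lemma:epsilon:p} hold trivially. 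Beyond this bookkeeping the argument is entirely elementary, so I do not anticipate a genuine obstacle; the only thing to be mildly careful about is the direction of the inequalities when passing from the multiplicative to the additive bound.
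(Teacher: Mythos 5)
Your proposal is correct and follows essentially the same route as the paper: bound the numerator and denominator of the conditioned probability multiplicatively by $(1\pm\epsilon)$ times the corresponding quantities for $\mathcal{D}_x$, divide to get $\bar{\mathcal{C}}_x$ within a factor $\frac{1\pm\epsilon}{1\mp\epsilon}$ of $\bar{\mathcal{D}}_x$, and convert to the additive bound $\frac{2\epsilon}{1-\epsilon}$ using $\bar{\mathcal{D}}_x\le 1$, then choose $\epsilon$ accordingly. Your explicit treatment of the degenerate $0/0$ case is a small point of extra care that the paper's proof leaves implicit.
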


\begin{proof}
  Given $k$ and $y'$, I introduce the following shortcuts:
  \begin{eqnarray}
    a & = & \mathcal{C}_x(y_k=0, y_j=y'_j) \\
    b & = & \mathcal{C}_x(y_k=1, y_j=y'_j) \\
    A & = & \mathcal{D}_x(y_k=0, y_j=y'_j) \\
    B & = & \mathcal{D}_x(y_k=1, y_j=y'_j)
  \end{eqnarray}
  Equation~\ref{eq:lemma:epsilon} lead to:
  \begin{eqnarray}
    A \left(1-\epsilon\right) & \le a \le & A \left(1+\epsilon\right) \\
    B \left(1-\epsilon\right) & \le b \le & B \left(1+\epsilon\right)     
  \end{eqnarray}
  Now we evaluate the conditioned probability:
  \begin{equation}
    \mathcal{\bar{C}}_x(k, 0, y') = \frac{a}{a+b} \le
    \frac{A\left(1+\epsilon\right)}{A\left(1-\epsilon\right)+B\left(1-\epsilon\right)} \le \mathcal{\bar{D}}_x(k, 0, y') + \frac{2\epsilon}{1-\epsilon}
  \end{equation}
  Using an analogous procedure, we find:
  \begin{eqnarray}
    \mathcal{\bar{D}}_x(k, 0, y') - \frac{2\epsilon}{1+\epsilon} 
    & \le \mathcal{\bar{C}}_x(k, 0, y') \le &
    \mathcal{\bar{D}}_x(k, 0, y') + \frac{2\epsilon}{1-\epsilon} \\
    \mathcal{\bar{D}}_x(k, 1, y') - \frac{2\epsilon}{1+\epsilon}
    & \le \mathcal{\bar{C}}_x(k, 1, y') \le &
    \mathcal{\bar{D}}_x(k, 1, y') + \frac{2\epsilon}{1-\epsilon}
  \end{eqnarray}
  Given $\epsilon'$, we choose $\epsilon$ such that: 
  \begin{equation}
    \epsilon' = \frac{2\epsilon}{1-\epsilon}
  \end{equation}
  For this $\epsilon'$:
  \begin{equation}
    \epsilon' \ge \frac{2\epsilon}{1+\epsilon}
  \end{equation}
  We get the inequalities:
  \begin{eqnarray}
    \mathcal{\bar{D}}_x(k, 0, y') - \epsilon' 
    & \le \mathcal{\bar{C}}_x(k, 0, y') \le &
    \mathcal{\bar{D}}_x(k, 0, y') + \epsilon' \\
    \mathcal{\bar{D}}_x(k, 1, y') - \epsilon'
    & \le \mathcal{\bar{C}}_x(k, 1, y') \le &
    \mathcal{\bar{D}}_x(k, 1, y') + \epsilon'
  \end{eqnarray}
  which are equivalent to the thesis.  
\end{proof}

We can now prove the  Prop.~\ref{prop:postsampp:star:not:circuit}.
  
\begin{proof}
  This proof is based on Prop.~\ref{prop:postbpp:postbqp}, hence it
  relies on the conjecture that the polynomial hierarchy
  does not collapse.

  Based on Prop.~\ref{prop:postbpp:postbqp}, we know that there is a
  language $L$ that belongs to {\bf PostBQP} but not to {\bf
    PostBPP}. According to the definition of {\bf PostBQP}, we decide
  that language $L$ using a polynomial-time uniform family of quantum
  circuits $\left\{Q_n : n\in \mathbb{N} \right\}$ with two bits of
  output.  I call $F$ the problem of sampling from this family of quantum
  circuits; I
  show that the quantum circuits of this family cannot be sampled by
  the algorithms used in {\bf PostSampP*}.

  Due to the origin of $Q_n$ from the definition of the language $L$
  in {\bf PostBQP}, the probability distribution $\mathcal{D}_{Q_{|x|},x}(y)$ of
  $F$ is, for $x\in L$:
  \begin{equation}
    \frac{\mathcal{D}_{Q_{|x|},x}(y_{\mathrm{sample}}=0, y_{\mathrm{valid}}=1)} {\mathcal{N}} < 1/3
    \; , \;
    \frac{\mathcal{D}_{Q_{|x|},x}(y_{\mathrm{sample}}=1, y_{\mathrm{valid}}=1)} {\mathcal{N}} > 2/3
  \end{equation}
  and, for $x\notin L$:
  \begin{equation}
    \frac{\mathcal{D}_{Q_{|x|},x}(y_{\mathrm{sample}}=0, y_{\mathrm{valid}}=1)} {\mathcal{N}} > 2/3
    \; , \;
    \frac{\mathcal{D}_{Q_{|x|},x}(y_{\mathrm{sample}}=1, y_{\mathrm{valid}}=1)} {\mathcal{N}} < 1/3
  \end{equation}
  where:
  \begin{equation}
    \mathcal{N} = \mathcal{D}_{Q_{|x|},x}(y_{\mathrm{sample}}=0, y_{\mathrm{valid}}=1) +
    \mathcal{D}_{Q_{|x|},x}(y_{\mathrm{sample}}=1, y_{\mathrm{valid}}=1)
  \end{equation}
  and $\mathcal{N}$ does not vanish for any $Q_n$ and $x$.

  We can rewrite these inequalities using the conditioned probability
  $\mathcal{\bar{D}}_{Q_{|x|},x}(k, b, y')$. Defining
  $y'=\left<0,1\right>$, for $x\notin L$:
  \begin{equation}
    \mathcal{\bar{D}}_{Q_{|x|},x}(0, 0, y') < 1/3
    \; , \;
    \mathcal{\bar{D}}_{Q_{|x|},x}(0, 1, y') > 2/3
  \end{equation}
  and, for $x\notin L$:
  \begin{equation}
    \mathcal{\bar{D}}_{Q_{|x|},x}(0, 0, y') > 2/3
    \; , \;
    \mathcal{\bar{D}}_{Q_{|x|},x}(0, 1, y') < 1/3
  \end{equation} 
  
  By contradiction, I assume that $F$ belongs to {\bf
    PostSampP*}. Then, there is an algorithm $B$, with the possibility
  of returning 'FAILED', which generates samples $y$ as a function of
  $x$ with a probability $\mathcal{C}_{Q_{|x|},x}(y)$ (conditioned to
  have a valid output, i.e. not 'FAILED'); the conditioned probability
  is $\mathcal{\bar{C}}_{Q_{|x|},x}(k, b, y')$. The condition on the
  error is:
  \begin{equation}
    \left| \frac{\mathcal{C}_{Q_{|x|},x}(y)}{\mathcal{D}_{Q_{|x|},x}(y)} -1 \right| \le \epsilon
  \end{equation}
  for every $y$. Thanks to Lemma~\ref{lemma:multiplicative:error:conditioned:probability}:
  \begin{equation}
    \left| \mathcal{\bar{C}}_{Q_{|x|},x}(k, b, y') - \mathcal{\bar{D}}_{Q_{|x|},x}(k, b,
    y') \right| \le \epsilon
  \end{equation}
  for every $k$, $0\le k<p\left(|x|\right)$, $b=0$ or 1, and $y'\in
  \{0,1\}^{p(|x|)}$.
   
  For $y'=\left<0,1\right>$ and for each $y_{\mathrm{sample}}=b$ bit:
   \begin{equation}
    \mathcal{\bar{D}}_{Q_{|x|},x}(0, b, y') - \epsilon
    \le \mathcal{\bar{C}}_{Q_{|x|},x}(0, b, y')
    \le \mathcal{\bar{D}}_{Q_{|x|},x}(0, b, y') + \epsilon
  \end{equation}
  Using the suitable inequality, chosen between these two, we get, for $x\in L$:
  \begin{equation}
    \mathcal{\bar{C}}_{Q_{|x|},x}(0, 0, y')  \le \mathcal{\bar{D}}_{Q_{|x|},x}(0, 0, y') + \epsilon < 1/3 + \epsilon
    \; , \;
    \mathcal{\bar{C}}_{Q_{|x|},x}(0, 1, y')  \ge \mathcal{\bar{D}}_{Q_{|x|},x}(0, 1, y') - \epsilon > 2/3 -\epsilon
  \end{equation}
  and, for  $x\notin L$:
  \begin{equation}
    \mathcal{\bar{C}}_{Q_{|x|},x}(0, 0, y')  \ge \mathcal{\bar{D}}_{Q_{|x|},x}(0, 0, y') - \epsilon > 2/3 -\epsilon
    \; , \;
    \mathcal{\bar{C}}_{Q_{|x|},x}(0, 1, y')  \le \mathcal{\bar{D}}_{Q_{|x|},x}(0, 1, y') + \epsilon < 1/3 + \epsilon
  \end{equation}
  We tweak this algorithm $B$, obtaining $M$, as follows. If $B$
  returns 'FAILED', $M$ also returns 'FAILED'.  Else, $B$ returns two
  bits, $y_{\mathrm{sample}}$, $y_{\mathrm{valid}}$. Then $M$
  inspects $y_{\mathrm{valid}}$: if it is 0, then it returns 'FAILED',
  else, it returns $y_{\mathrm{sample}}$. The latter will take place
  with non-vanishing probability, bacause $B$ does not return 'FAILED'
  with non-vanishing probability and $\mathcal{N}$ does not vanish for
  any $x$. Moreover, conditioned to not returning 'FAILED', the bit
  returned by $M$ has the probability distribution, for $x\in L$:
  \begin{equation}
    \mathcal{P}_{Q_{|x|},x}(0) < 1/3 + \epsilon \; , \; \mathcal{P}_{Q_{|x|},x}(1) > 2/3 -\epsilon
  \end{equation}
  and, for  $x\notin L$:
  \begin{equation}
    \mathcal{P}_{Q_{|x|},x}(0) > 2/3 -\epsilon
    \; , \;
    \mathcal{P}_{Q_{|x|},x}(1) < 1/3 + \epsilon
  \end{equation}
  We thus see that $M$ matches the requirements for the algorithm $M$
  in the definition of {\bf PostBPP}, leading to an error
  $1/3+\epsilon$. If $\epsilon$ is small enough, $1/3+\epsilon<1/2$,
  the class remains the same. Then we conclude that $L$ belongs to
  {\bf PostBPP}, which is a contradiction.
\end{proof}

\bibliographystyle{unsrt}
\bibliography{hidden-variable}

\end{document}